\newcommand{\mmul}{}
\newcommand{\lf}{\left}
\newcommand{\rg}{\right}
\newcommand{\lgg}[1]{\lg\lg #1}  
\newcommand{\ceil}[1]{\lceil #1 \rceil}
\newcommand{\floor}[1]{\lfloor #1 \rfloor}
\newcommand{\Ceil}[1]{\left\lceil #1 \right\rceil}
\newcommand{\mx}[1]{\max{\{ #1 \}}}
\newcommand{\mn}[1]{\min{\{ #1 \}}}
\newenvironment{elist}{\begin{list}%
                           {(\arabic{itemno})} {\usecounter{itemno}
                               \setlength{\itemsep}{.1ex} 
                               \setlength{\rightmargin}{\leftmargin}}
                            }{\end{list}} \newcounter{itemno} 
\newenvironment{ilist}{\begin{list}%
                           {{\bf $\bullet$ }} {
                               \setlength{\itemsep}{.1ex}
                               \setlength{\rightmargin}{\leftmargin}}
                            }{\end{list}}
\newcommand{\BBox}{\rule{0.1in}{0.1in}}
\newenvironment{proof}{\noindent {\thheadfont Proof:}}{\BBox \\*}
\newcommand{\bsplib}{\textsl{BSPlib}}
\newcommand{\thheadfont}{\bf}
\newtheorem{lemma}{Lemma}[section]
\newtheorem{corollary}{Corollary}[section]
\newtheorem{proposition}{Proposition}[section]
\newtheorem{claim}{Claim}[section]
\begin{document}
\title{BSP Sorting: An experimental Study \\
      }
\author{ Alexandros V. Gerbessiotis \\
         CS Department \\
         New Jersey Institute of Technology \\
         Newark, NJ 07102. \\
                 \and
        Constantinos  J. Siniolakis       \\
        The American College of Greece \\
        6 Gravias St. \\
        Aghia Paraskevi\\
        Athens, 15342\\
        Greece.
       }

\maketitle
\thispagestyle{empty}

\begin{abstract}
The Bulk-Synchronous Parallel model of computation has been used for the 
architecture independent design and analysis of parallel algorithms whose
performance  is expressed not only in terms of problem size $n$ but also 
in terms of parallel machine properties. 
In this paper  the performance of implementations of deterministic and 
randomized BSP sorting algorithms is examined. 
The deterministic algorithm is the one in \cite{GS96a,GS99a} that
uses deterministic regular oversampling and parallel sample sorting and is
augmented to handle duplicate keys transparently with optimal asymptotic
efficiency. The randomized algorithm derives from the algorithm in 
\cite{GV94} that is sample-sort based and uses oversampling
and the ideas introduced with the deterministic algorithm. The resulting
randomized design, however, works differently from traditional parallel 
sample-sort based algorithms and is also augmented to transparently
handle duplicate keys with optimal asymptotic efficiency
thus eliminating the need to tag all input keys and
to double communication/computation time.
Both algorithms are shown to balance the work-load evenly among the
processors and the use and precise tuning of oversampling that the
BSP analysis allows combined
with the transparent duplicate-key handling insures regular and
balanced communication.
Both algorithms  have been
implemented in ANSI C and their performance (scalability and 
efficiency issues)
has been studied on a distributed memory machine, a Cray T3D. 
The experimental results obtained from these implementations are 
reported in this work. The validity of the theoretical model is 
also tested; based on  the theoretical performance of 
each algorithm under the BSP model and the BSP parameters 
of a Cray T3D, it is  possible to 
estimate the actual performance of the implementations based on 
the theoretical performance of the designed algorithms.
\end{abstract}

\newpage

\pagenumbering{arabic}

\section{Introduction}
\label{s1}

One of the early attempts to model a parallel computer has been
the Parallel Random Access Machine (PRAM)  \cite{Fortune78} which is one
of the most widely studied abstract models of a parallel computer.
The advantage of the PRAM model in terms of algorithm design is its
simplicity in capturing parallelism and abstracting away the communication
requirements of parallel computing such as communication latency, block
transfers, memory and network conflicts during routing, bandwidth of
interconnection networks, interprocessor communication, memory management
and synchronization.
In addition, most of the algorithms designed for the PRAM model work
under unlimited parallelism assumptions, where the number of processors
utilized to solve a problem is a function of  problem size.
It is these advantages that make the PRAM easy
to program and popular for algorithm development.  The success of the
PRAM model is witnessed by the hundreds of algorithms that have been 
designed and extensively analyzed on various variants of the PRAM.

In parallel machines that have been built so far, however, communication
is expensive, synchronization is not instantaneous and unlimited parallelism
is unavailable.
The majority of parallel machines and available parallel software do not allow
programmers to write programs that are both efficient, portable and scalable;
such objectives are only feasible by fine-tuning parallel code and taking
into account the finest details of the underlying architecture and the
programming interface that is available to the programmer.
Parallel algorithm design, on the other hand, usually ignores
communication and/or synchronization issues and works only under 
unlimited parallelism assumptions (or by simulations on limited processor
models). Many parallel algorithms are thus 
designed having in mind factors not present in any
reasonable machine, such as zero communication delay or infinite bandwidth.

The need for architecture independent parallel algorithm design is thus
immediate for effective and efficient programming of existing and future
parallel hardware platforms. Such an algorithm design would work in two
steps. First, an abstraction of parallel hardware is obtained to allow
the performance of a parallel machine to be described in some general
architecture independent terms through a number of parameters that
reflect the computation, communication, and synchronization capabilities
of the hardware. Then, the performance of a parallel
algorithm would be expressed as a function of these
parameters and problem size similarly to traditional sequential 
algorithm design.
This way it would be possible to describe the performance of a parallel
algorithm not only on existing machines but also on machines not currently
built as long as  sufficient information on the not-yet-built machines
can be obtained (i.e. the parameters of the abstraction become available).

The introduction of realistic parallel computer models
such as the Bulk-Synchronous Parallel (BSP) model \cite{Valiant90,Valiant90b}
and  the LogP~\cite{Culler} model and their extensions (\cite{EBSP,Baumker95})
come to address these limitations of
parallel computing. Both models take into reasonable consideration
synchronization, communication, and  latency issues related to
both communication, bandwidth limitations and network conflicts during routing.

A  considerable amount of work has been devoted in the study of the BSP
model, and the  analysis and design of BSP algorithms.  In this work, we
present parallel implementations, using the Oxford BSP Toolset, \bsplib{} 
\cite{Hill96}, of deterministic and randomized BSP sorting
algorithms on a distributed memory machine, a Cray T3D. A report 
is also included on the performance and relative merits of these algorithms 
and the knowledge gained from the study of the predicted theoretical 
performance and its comparison to actual running time.
A comparison of our algorithm implementations to other parallel
sorting algorithm implementations is also presented.

\subsection{The Parallel Programming Model}

The {\em Bulk-Synchronous Parallel} (BSP) model of computation deals explicitly 
with the notion of communication and synchronization among computational 
tasks and has been proposed in \cite{Valiant90,Valiant90b} as a 
unified framework for the design, analysis and programming of general purpose 
parallel computing systems.  It offers the prospect of achieving both
scalable parallel performance and architecture independent portable and
reusable parallel software 
and provides a framework which permits the performance of parallel software
to be analyzed and predicted in a consistent and concise 
way. 
Processor components in the BSP model advance 
jointly through the program, with the required remote communication 
occurring between {\em supersteps}. 
A superstep may be thought of as a segment of
computation during which each processor performs a given task using
data local to the processor before the start of the superstep.
Such a task may include  local computations, 
message transmissions, and  message receipts.  
A BSP computer as described in 
\cite{Valiant89,Valiant90,Valiant90b} then consists of the following three 
components:
(a) a {\em collection of} $p$ {\em processor/memory components},
(b) a {\em communication network} that delivers messages point to 
   point among the components,
and (c)  {\em facilities} for synchronization
   of all or a subset of the processors.

Any BSP computer can be modeled by $p$, the number of processor components,
$L$, the minimal time, measured in terms of basic computation steps,
between successive synchronization operations, and $g$ the ratio of
the total throughput of the whole system in terms of basic computation
steps, to the throughput of the router in terms of words of
information delivered.  The  definition of $g$ relates to the 
routing of an $h$-relation in continuous message usage, that is, 
the situation  where each processor sends or receives at most $h$ messages 
(or words of information); $g$ is the cost of communication so that
an $h$-relation is realized within  $gh$ steps \cite{Valiant90},
for any $h$ such that $h \geq h_0$, where $h_0$ is a machine
dependent parameter. Otherwise, if $h < h_0$, the cost of communication 
is $L$.
%

A superstep can complete at any time after $L$ time units.
The time complexity of a superstep {\cal S} in a BSP algorithm is determined 
as follows.  Each superstep is charged $\mx{L, x + gh}$ basic time steps, 
where $x$ is the maximum number of basic computational operations executed 
by any processor during {\cal S}, and $h$ is the maximum amount of 
information transmitted or received by any processor.  
\label{UnifCharging}

The performance of a BSP algorithm $\cal{A}$ is specified in three parts.  
First, a sequential algorithm $\cal{A}^{*}$ is specified with which the
BSP algorithm is compared  and  the charging policy for basic operations 
of both $\cal{A}$ and $\cal{A}^{*}$ is made explicit.  
Ratios between runtimes on pairs of models that have the 
same set of local instructions will  thus be measured and therefore 
operations can be defined in a higher level of abstraction than 
machine level instructions.  

Second, two ratios $\pi$ and $\mu$ are specified.  The former,
$\pi$, is the ratio between the  computation time $C_{\cal{A}}$, of the 
BSP algorithm, over the time $C_{\cal{A}^{*}}$ of the comparing sequential
algorithm divided by $p$, i.e., $\pi = pC_{\cal{A}}/C_{\cal{A}^{*}}$,
and is a measure of the computational efficiency of $\cal{A}$.
The latter, $\mu$, is the ratio between the parallel time 
$M_{\cal{A}}$ required by the communication supersteps of the BSP 
algorithm and the  computation time of $\cal{A}^{*}$ divided by $p$, i.e., 
$\mu = pM_{\cal{A}}/C_{\cal{A}^{*}}$, and is a measure of the impact
of communication in achieving optimal efficiency (the lower the $\mu$
the more communication efficient $\cal{A}$ is).  
The ratio $p/(\pi+\mu)$
gives the speedup, 
and the ratio $1/(\pi+\mu)$ the parallel efficiency of algorithm $\cal{A}$.

Finally, conditions on $n$, $p$, $L$ and $g$ are then specified that 
are sufficient for the algorithm to make sense and the 
claimed (upper) bounds on $\pi$ and $\mu$ to hold.  Corollaries are claimed
for some sufficient conditions for the most interesting optimality 
criteria, such as $c$-optimality, i.e., $\pi = c+o(1)$ and $\mu = o(1)$.  
Insistence on one-optimality leads to algorithms
that only work for restricted ranges of $p$, $L$ and $g$.
All asymptotic bounds refer to the problem size as $n \rightarrow \infty$.  
The other parameters may also tend to $\infty$ if they are expressed 
in terms of $n$, though it is not assumed so. 

The charging policy for local operations on the BSP model for the
sorting algorithms is described. Since the mechanisms 
of the BSP model are similar to those used by sequential models and 
performance ratios of these two are important, operations can be defined
in a higher level than machine level instructions.  
A dominant term in the BSP
algorithms to be described later is contributed by sequential sorting
performed independently at each processor.  A  charge of $n \lg{n}$ time 
for sorting $n$ keys sequentially \cite{Knuth73}, and
$n \lg{q}$ time for merging $q$ lists of total size $n$ 
\cite{Knuth73} are claimed.  
For other operations, 
$O(1)$ time is charged for operations over an associative operator in 
parallel-prefix computations and $O(1)$ time for each comparison performed.
Finally, $\ceil{\lg{n}}$ comparisons are charged for performing 
binary search in a sorted sequence of length $n-1$.

\section{Sorting on the BSP model: An overview}
\label{s2}

A significant amount of work has been devoted in the study of the BSP
model 
\cite{Valiant89,Valiant90,Valiant90b,McColl93,McColl94,McColl94b,EBSP,Baumker95}
and the analysis 
\cite{Baumker95,Bilardi96,Bisseling93,GV94,GS96a,Goodrich96} 
and design of BSP algorithms 
\cite{Baumker96,Baumker96a,GS96b,GS96c,GS96d,GS96e,GS96f,McColl93,McColl94,McColl94b}. 

The problem of parallel sorting $n$ keys \cite{Leighton91} has drawn 
considerable attention.  
First approaches were sorting networks introduced by Batcher 
\cite{Batcher68}.  Since then, a wealth of literature 
concerning parallel sorting has been published \cite{Leighton85, Leighton91}.  
The first sorting network to achieve, however, the optimal 
$O(\lg{n})$ time bound to sort $n$ keys is the AKS \cite{Ajtai83} 
network of width $O(n)$.  
Subsequently,  
Reif and Valiant \cite{Reif87} presented a
randomized sorting algorithm
that runs on a fixed-connection network and 
achieves the same time bound as the AKS algorithm; the constant multipliers in
the randomized algorithm are, however, considerably smaller. 
Cole \cite{Cole88} was the first to present optimal $O(\lg{n})$ time 
sorting algorithms for $n$-processor CREW and EREW PRAMs.  
These methods fail, however, to deliver 
optimal performance when directly implemented on the BSP model 
of computation. 

{\em Note}: In the expressions for running time on the BSP model below, terms
that involve $L$ are included in the communication -- and not 
in computation -- time, to simplify exposition. Polynomial slack would
mean that $n_p =n/p$ is a constant degree polynomial, i.e. $p=n^{1-t}$ for
some constant $t$ such that $0 < t < 1$.

Previously known results on BSP sorting include deterministic 
\cite{Adler95,Bilardi96,GS96a,GS96f,Goodrich96,S96-TR-09} and randomized 
\cite{GV94,GS96c,GS97b} algorithms.

An adaptation of the AKS sorting
network \cite{Ajtai83} on the BSP model is shown in
\cite{Bilardi96} to yield computation and communication time $O(n_p
 \lg{n})$ and $g  O(n_p  \lg{p})+ L O(\lg{p})$ 
respectively.  The constant factors hidden in this  algorithm, 
however, is  considerably large and not of practical significance.

It is shown in \cite{Adler95} that for all values of $n$
and $p$ such that $n \geq p$, a BSP version of column-sort
\cite{Leighton85} requires $O(\zeta^{3.42} n_p \lg{n_p})$ and 
$O(g \zeta^{3.42} n_p) + O(L \zeta^{3.42} )$
time for computation and communication respectively, where
$\zeta = \lg{n}/\lg{(n/p)}$. 

In \cite{Bilardi96}, it is shown that an adaptation on the BSP model of the
cube-sort algorithm \cite{Cypher92} requires  computation time 
$O(25^{\lg^{*}{\!n} - \lg^{*}{\!(n_p)}}  
\zeta^{2}   n_p \lg{n_p}))$ and
$  
O(g \; 25^{\lg^{*}{\!n} - \lg^{*}{\!(n_p)}}
\zeta^{2}   n_p ) 
+   O(L \; 25^{\lg^{*}{\!n} - \lg^{*}{\!(n_p)}} 
\zeta^{2}   )$ for communication.
Furthermore, as noted in \cite{Goodrich96},
a modification of cube-sort 
shown in \cite{plaxton}  eliminates the term
$25^{\lg^{*}{\!n} - \lg^{*}{\!(n_p)}}$ and the resulting algorithm
improves upon the algorithm of \cite{Adler95}.

In \cite{GV94}, a randomized BSP sorting algorithm that uses the
technique of oversampling is introduced that 
for an ample range of the parameters $n$ and $p$, i.e., $p =
O(n/\lg^{1+\alpha}{n})$, for any constant $\alpha >0$, 
requires -- under realistic assumptions -- computation
and communication time $(1+o(1)) \ (n \lg{n}/p)$ 
and $O(g \ \zeta \ n_p ) +
O(L \ \zeta )$ 
respectively, with
high-probability. A similar algorithm for the case $p^2 <n$, but 
with a less tighter analysis is discussed in \cite{SPAA94}. 
Given any $O(n \lg{n})$ worst-case running time sequential sorting 
algorithm, the randomized BSP algorithm utilizes the sequential
algorithm for local 
sorting and exhibits parallel efficiency which is asymptotically 100\% 
provided that the BSP parameters are appropriately bounded (and these
bounds accommodate most, if not all, currently built parallel machines).
The constants hidden affect low order terms and are well defined.
This has been one of the first attempts to establish in some general
way parallel efficiency for an algorithm by taking into consideration 
communication and synchronization delays and describing optimality criteria 
not in absolute terms, that may be meaningless, but by comparing the 
proposed parallel algorithm to the best sequential one. The bounds
on processor imbalance during sorting are tighter than any other 
random sampling/oversampling algorithm  
\cite{HC,Frazer70,Reif87,Reischuk85}.
The randomized algorithm has been implemented on realistic machines
(SGI Power Challenge, Cray T3D and IBM SP2) and exhibited performance 
well within the bounds indicated by the theoretical analysis 
(\cite{Alex96a,GS97a}). We note that sample-based sorting algorithms
are extensively studied and implemented in the literature 
\cite{Blelloch91,Hightower92}. For example, the sample-sort
algorithm in \cite{Blelloch91} is computationally equivalent to the
one round case of \cite{GV94}. The analysis of \cite{GV94} is however
much tighter. It allows better control of oversampling with the end
result of achieving smaller bucket expansion (key imbalance during
the routing operation). The work of \cite{Hightower92} applied to
$d$-dimensional meshes is similar to the general 
algorithm of \cite{GV94}.

In \cite{GS96f,GS97b}, it is shown that  asymptotical
performance comparable to that of \cite{GV94} can be achieved 
for values of $p$ much closer to $n$, i.e., 
$p =\omega(n\lg{\lg{n}}/\lg{n})$ by a new randomized sorting algorithm.
The failure probability is, however, significantly lower. As the 
improvements 
of this algorithm are interesting only for large $p$ and the corresponding conditions 
on $L$ and $g$ for large $p$ can be satisfied only by machines with very 
low $L$ and $g$ (i.e. PRAM-like behaving machines), the practical
implications of this algorithm are less significant and
interesting than its theoretical merit.

Under the same realistic assumptions as \cite{GV94}, 
a new deterministic sorting algorithm is introduced in \cite{GS96a},
and a bound of 
 $(1+(
\floor{\zeta  (1- \zeta^{-1} ) }
((1- \zeta^{-1} )/2)+o(1)) (n \lg{n}/p)$ and
$O(g \zeta n_p )+ O(L\; \zeta ) $
is shown for computation and communication respectively,
thus improving upon the upper bounds of 
\cite{Adler95,Ajtai83,Bilardi96,Cypher92,Leighton85}. The bound on
computation is  subsequently improved in \cite{GS96f,GS99a}. 
In particular, for $p = n/\lg^{2+\alpha}{n}$, 
$\alpha = \Omega(1)$, the improved deterministic sorting algorithm 
in \cite{GS96f,GS99a} requires computation and communication time
$(1+2/\alpha+o(1)) \ n\lg{n}/p$ and 
$O(g \; \zeta \; n_p )+
O(L\; \zeta )$ respectively.
The algorithm performs deterministic sorting by extending 
regular sampling, a technique introduced in \cite{SH}, to perform deterministic 
regular oversampling. Past results that use regular sampling 
have been available for cases with $p^2 < n$. The BSP 
algorithm in \cite{GS96a,GS96f,GS99a}
further extends the processor range and
achieves asymptotically optimal efficiency  for a range of $n/p$ that 
is very close to that ratio of the randomized BSP algorithms in
\cite{GV94,GS97b}. 
This was made possible by a detailed and precise quantification of the 
input key imbalance among the processors during the phases of deterministic 
sorting thus contributing to the understanding of regular oversampling. By 
using regular oversampling in a deterministic setting, it is possible to 
regulate the oversampling ratio to bound the maximum key imbalance among the 
processors. As in the case of 
randomized sorting, the insistence, under an architecture independent 
algorithm design, of satisfying the criterion of {\em one-optimality} 
(i.e. optimal speedup/efficiency) led to these improvements.
In this work we use the structure of this algorithm to derive
a sample-based randomized sorting algorithm that does not follow
the standard pattern of sample and splitter select, key routing and
local sorting but instead follows the pattern of the deterministic
algorithm i.e. local sort, sample and splitter select, key routing
and local merging.


In \cite{Goodrich96, S96-TR-09} independent BSP adaptations of parallel 
merge-sort \cite{Cole88} are presented that for all values of $n$ 
and $p$ such that $p \leq n$, require computation and 
communication/synchronization time 
$O(n \lg{n}/p)$ and 
$O(g \zeta n_p  )+
O(L \zeta )$ 
respectively.
As the BSP adaptation of parallel merge-sort is at least as involved 
as the original parallel merge-sort (EREW PRAM variant) algorithm 
\cite{Cole88}, the constant factors involved in the analysis are 
considerably large and the algorithm seems to be of little practical 
use as opposed to the algorithms in \cite{GV94,GS96a,GS99a}.

\section{Contents of the paper}
\label{s3}


In this work we study an implementation of the deterministic algorithm 
in \cite{GS96f,GS99a} that uses deterministic regular oversampling,
which extends the notion of regular sampling of \cite{SH}, 
and parallel sample sorting that allows the algorithm to
work for a wider range of processor size $p$. The deterministic
algorithm is also augmented to handle transparently duplicate keys 
with optimal asymptotic efficiency. Our method of duplicate-key handling
tags only a fraction $o(n)$ of the $n$ input keys, and does not
double communication and/or computation time that other duplicate
handling approaches may require \cite{jaja1,jaja2,jaja3}. The idea 
of using deterministic regular-oversampling in a 
deterministic sorting algorithm 
and the accurate quantification of its effects by measuring constants
under the BSP model, 
results in a fine and quantifiable balance of computational load and 
communication time.  Combined with our transparent and efficient 
duplicate key handling it leads to an algorithm that maintains its 
optimal performance even if all keys are the same.

The basis of the randomized algorithm is the foundation of randomized
sample-sort algorithms \cite{HC,Frazer70,Reif87,Reischuk85}
that use oversampling \cite{Reif87}
and in particular the
BSP adaptation discussed in \cite{GV94} combined with the ideas used in
the deterministic algorithm. The resulting randomized sorting algorithm
whose 
implementation is studied works differently from other sample-sort
based algorithms including the one in \cite{GV94}. The algorithm
first local sorts, then
sample and splitter selects, routes keys, and in the very end $p$-way
merges sorted sequences as opposed to traditional sample-sort algorithms
that sample and splitter select, route keys, and local sort.
Our approach augmented with the transparent and efficient duplicate-key
handling also used in the deterministic algorithm results in improved
parallel performance, work balance and communication minimization.

Even though the two algorithms look similar, randomized oversampling
is provably superior to deterministic regular-oversampling. The oversampling
parameter in the former case can vary more widely than the corresponding
one of the latter case thus resulting in more balanced communication
and balanced work-load among the processors.

Parallel sorting algorithms originally designed for other models of 
computation (e.g. bitonic-sort \cite{Batcher68,GS96po}) are also 
implemented. The implementation platform is a distributed memory 
machine, a Cray T3D, and the communication library used is the 
Oxford BSP Toolset \cite{Hill96}.

The introduced algorithms are designed and analyzed in an architecture
independent setting. 
Such a design is helpful in deciding the best way to implement
various parallel operations optimally on the given platform based
on the knowledge of the BSP parameters of the target platform only.
For example, problem size $n$ of the experiments, and
machine configuration parameters as expressed in terms of the BSP
parameters $p, L$ and $g$  determine the choice
of broadcasting and parallel prefix algorithms in parallel sorting.
It is also used in determining the choice of values of certain parameters
of the algorithm implementations (such as those related to oversampling
issues).  This choice of parameter values subsequently affects
the maximum input key imbalance among the processors which has a
significant impact on running time.
Insistence on one-optimality both in the design and analysis of the 
given algorithms is  hoped to lead to efficient implementations 
that would confirm the claims of the theoretical analysis.

The theoretical analysis of the algorithms can be used in assessing the 
performance of the implementations because the designed algorithms
have been shown to be  one-optimal and this way, they have been 
directly compared to the best sequential implementations. 
In addition, in the theoretical analysis no hidden constants are involved
with the most significant terms of parallel running time and the
constants in low order terms are well understood.
At the conclusion of the experiments theoretical performance is compared
to observed performance on the test platform for this purpose.

In Section \ref{s4} we introduce some primitive operations that will be used
in the sorting algorithms. In Section \ref{s5} we introduce the deterministic
algorithm  {\sc Sort\_Det\_BSP} that will be implemented.
The algorithm is one of several cases of a more general algorithm described 
in \cite{GS99a}. The performance of the implemented algorithm is however 
analyzed in detail and more accurately than the general algorithm of 
\cite{GS99a}, and experiments on its implementation are discussed in
Section \ref{s6}. Its augmentation to handle duplicate keys transparently
and with optimal asymptotic efficiency is also discussed separately in
Section \ref{duplicate}.
Subsequently, a special case of the  randomized BSP algorithm in \cite{GV94} 
is introduced, called {\sc Sort\_Ran\_BSP} that is a sample-sort based
algorithm. 
The randomized algorithm of our implementations, {\sc Sort\_IRan\_BSP},
derives from {\sc Sort\_Det\_BSP} and {\sc Sort\_Ran\_BSP} and is introduced
in the same section; it works differently however from traditional
sample-sort based algorithms.  Duplicate-key handling is also implemented
transparently for {\sc Sort\_IRan\_BSP} using the technique of
Section \ref{duplicate} used for {\sc Sort\_Det\_BSP}.
For each of the two implemented algorithms, two variants are 
studied experimentally depending on how sequential sorting is performed.
Finally the obtained experimental results are discussed in detail
in Section \ref{s6} and compared to other parallel sorting implementations like 
those in \cite{Hilla,jaja1,jaja2,jaja3}.

The results we obtained justify the belief that architecture independent
parallel algorithm design and analysis is possible, plausible, reliable and 
consistent and can be used to model and predict the performance of
parallel algorithms on a variety of parallel machines with satisfactory
accuracy. The vehicle for such an architecture independent parallel
algorithm design and analysis has been the BSP model of computation.
Its parameters seem to model a parallel computer well enough to make
parallel program performance estimation plausible for such diverse problems
as matrix computations (\cite{G98}) and parallel sorting. 
The experiences related to the latter problem are described in more
detail in this work.

\section{Primitive Operations}
\label{BasicOper}
\label{s4}

In this section  BSP algorithms for two
fundamental operations, namely {\em broadcast} and {\em parallel-prefix} and
{\em parallel radix-sort} are introduced.  All these primitives are auxiliary routines for
the algorithms described in later sections. 
A fully detailed account of such operations along with 
additional results can be found in \cite{GS96e}.

\begin{lemma}
\label{Broadcast}
There exists a BSP algorithm for broadcasting an $n$-word message that
requires time at most $M^{n}_{\mathit{brd}}(p) =(\ceil{n/\ceil{n/h}}+h-1)
\mx{L, gt\ceil{n/h}}$,  for any integer $2 \leq t \leq p$, where 
$h=\ceil{\log_{t}{((t-1)p+1)}}-1$.
\end{lemma}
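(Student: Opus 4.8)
\noindent The plan is to prove the lemma by combining a $t$-ary spanning-tree broadcast with message pipelining. I would first set the block size $b=\ceil{n/h}$ and split the $n$-word message into $\mathcal{B}=\ceil{n/b}=\ceil{n/\ceil{n/h}}$ consecutive blocks of at most $b$ words each; the elementary identity $\ceil{n/\ceil{n/h}}\le h$ (since $\ceil{n/h}\ge n/h$) guarantees $\mathcal{B}\le h$. Next I would fix, rooted at the processor that initially holds the message, a tree $T$ spanning all $p$ processors in which every node has at most $t$ children and whose depth is at most $h$. The key point is that such a tree exists for the given value of $h$: a complete $t$-ary tree of depth $h$ contains $1+t+\cdots+t^{h}=(t^{h+1}-1)/(t-1)$ nodes, and $h=\ceil{\log_{t}((t-1)p+1)}-1$ is precisely the least integer for which this quantity is at least $p$. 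Verifying this last inequality (unwinding the ceiling, and checking that $1\le h$ holds for all $2\le t\le p$) is the one genuinely fiddly computation, so I would carry it out carefully.

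Then I would specify the pipelined schedule: in superstep $s$, for $s=1,\dots,\mathcal{B}+h-1$, the root sends block $s$ to all of its children whenever $1\le s\le\mathcal{B}$, and every non-root processor forwards to all of its children the block (if any) that it received from its parent during superstep $s-1$. A straightforward induction on the depth $d$ then shows that a processor at depth $d$ has received block $i$ by the end of superstep $i+d-1$; taking $i=\mathcal{B}$ and $d=h$ shows that every processor holds every block of the message by the end of superstep $\mathcal{B}+h-1$, which establishes correctness and fixes the number of supersteps as $\mathcal{B}+h-1=\ceil{n/\ceil{n/h}}+h-1$.

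For the running time I would observe that each superstep realizes at most a $(tb)$-relation: a processor sends at most one block of size $\le b$ to each of its $\le t$ children, hence transmits at most $tb=t\ceil{n/h}$ words, while it receives at most one such block, so the fan-out dominates. By the BSP charging rule of Section~\ref{UnifCharging} each superstep therefore costs at most $\mx{L,gt\ceil{n/h}}$, and multiplying by the superstep count yields the claimed bound $M^{n}_{\mathit{brd}}(p)=(\ceil{n/\ceil{n/h}}+h-1)\mx{L,gt\ceil{n/h}}$.

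The approach involves no real conceptual difficulty beyond bookkeeping: the substantive step is the tree-size/ceiling arithmetic that certifies that depth $h$ suffices for $p$ processors for every admissible $t$, and the only other thing to watch is that the first $h-1$ and last $\mathcal{B}-1$ supersteps are only partially occupied yet are still charged in full, which is harmless since we claim only an upper bound. Everything else is the standard pipelined $t$-ary-tree broadcast, so I would keep the description of the schedule terse and lean on the induction.
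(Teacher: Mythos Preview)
Your proposal is correct and follows essentially the same approach as the paper: a pipelined $t$-ary tree of depth $h$ with the message split into blocks of size $m=\ceil{n/h}$, yielding $\ceil{n/m}+h-1$ supersteps at cost $\mx{L,gtm}$ each. The paper's own proof is considerably terser---it simply asserts the depth, the block size, the superstep count, and the per-superstep cost---so your added justification of the depth formula, the explicit schedule, and the induction for correctness are all consistent elaborations rather than departures.
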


\begin{proof}
The underlying algorithm employs a pipelined $t$-ary tree, consisting of $p$ 
nodes, for some appropriate $t$. The depth of the tree is 
$h=\ceil{\log_{t}{((t-1)p+1)}}-1$. In each superstep, the root processor 
of a subtree
sends $t$ copies of a separate $m$-word segment of the original message to 
its children,  where $m = \ceil{n/h}$. Similarly, each internal processor 
sends $t$ copies of the message that it received in the previous superstep 
to its own children.  The algorithm completes within at most $\ceil{n/m}+h-1$ 
supersteps. Each superstep requires $gtm$ communication time, and the lemma 
follows.  
\end{proof}

\begin{lemma}
\label{Prefix}
There exists a BSP algorithm for computing $n$ independent parallel-prefix 
operations that requires time at most
$C^{n}_{\mathit{ppf}}(p) = 2(\ceil{n/\ceil{n/h}}+h-1) \mx{L, t\ceil{n/h}}$
and $M^{n}_{\mathit{ppf}}(p) = 2(\ceil{n/\ceil{n/h}}+h-1) \mx{L, 
g2t\ceil{n/h}}$,  for computation and communication respectively, for any 
integer $2 \leq t  \leq  p$, where $h= \ceil{\log_{t}{p}}$, for a total 
time of  $T^{n}_{\mathit{ppf}}(p) = C^{n}_{\mathit{ppf}}(p) + 
M^{n}_{\mathit{ppf}}(p)$.
\end{lemma}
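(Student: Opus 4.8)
The plan is to follow the template of Lemma~\ref{Broadcast}: a parallel-prefix computation on a tree is a reduction (an \emph{up-sweep} from the leaves to the root) followed by a distribution (a \emph{down-sweep} from the root to the leaves), and each of these two phases is structurally a pipelined $t$-ary-tree communication of exactly the kind already analyzed there. First I would arrange the $p$ processors as the nodes of a $t$-ary tree of depth at most $h=\ceil{\log_{t}{p}}$, for the chosen integer $t$ with $2\le t\le p$. In the up-sweep each node combines, with the associative operator, the partial results arriving from its at most $t$ children together with its own value and forwards the result to its parent, so that on completion every node holds the ``sum'' over the leaves in its subtree; it also retains the individual children-subtree sums locally. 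In the down-sweep the root sends the identity element downward, and each internal node, using the children-subtree sums it kept from the up-sweep, forms the prefix offset destined for each of its children and sends those offsets down; a leaf finally combines the offset it receives with its own value to obtain its prefix result. Exactly as in Lemma~\ref{Broadcast}, each phase is pipelined by splitting the $n$ independent prefix instances into $h$ segments of $m=\ceil{n/h}$ words and streaming them through the tree, so each phase completes in $\ceil{n/m}+h-1=\ceil{n/\ceil{n/h}}+h-1$ supersteps and the whole computation uses $2(\ceil{n/\ceil{n/h}}+h-1)$ supersteps.

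Next I would bound the cost of a single superstep. In any superstep a node handles one $m$-word segment with each of its at most $t$ children: in the up-sweep it applies the associative operator to combine these segments (and its own), and in the down-sweep it forms the $t$ prefix offsets for the children, itself a length-$t$ prefix over $m$-word values; by the $O(1)$ charge per associative-operator application, both cost at most $t\ceil{n/h}$ basic operations, which accounts for the per-superstep computation charge $\mx{L,t\ceil{n/h}}$. For communication, a node sends or receives at most $t\ceil{n/h}$ words to or from its children and at most $\ceil{n/h}$ further words to or from its parent, hence at most $2t\ceil{n/h}$ words in total, so each superstep is charged $\mx{L,g2t\ceil{n/h}}$. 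Multiplying these per-superstep charges by the superstep count $2(\ceil{n/\ceil{n/h}}+h-1)$ gives the claimed bounds $C^{n}_{\mathit{ppf}}(p)$ and $M^{n}_{\mathit{ppf}}(p)$, and $T^{n}_{\mathit{ppf}}(p)=C^{n}_{\mathit{ppf}}(p)+M^{n}_{\mathit{ppf}}(p)$ follows immediately.

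The main thing needing care is the pipelining bookkeeping when $h$ does not divide $n$ and $m=\ceil{n/h}$: one must check that $h$ segments of width $m$ indeed cover all $n$ instances and that the level-by-level schedule stalls nowhere, so that a phase is exactly $\ceil{n/m}+h-1$ supersteps --- but this is the same argument already used for Lemma~\ref{Broadcast}. The rest is routine: $h=\ceil{\log_{t}{p}}\ge 1$ is a valid (clean) upper bound on the tree depth for $2\le t\le p$, the down-sweep can simply be scheduled after the up-sweep (or pipelined behind it), and keeping the children-subtree sums at each internal node during the up-sweep ensures the down-sweep needs no communication beyond the offsets it already sends.
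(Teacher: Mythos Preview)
Your proposal is correct and follows essentially the same approach as the paper: two passes (an up-sweep and a down-sweep) of the pipelined $t$-ary tree scheme from Lemma~\ref{Broadcast}, with the resulting doubling of the superstep count and the per-superstep computation and communication bounds you derive. The paper's own proof is much terser and differs only in a minor structural point --- it places the $p$ processors at the \emph{leaves} of the tree (with at most $p$ additional internal nodes, each simulated by one of the same processors), which makes $h=\ceil{\log_{t}{p}}$ the exact depth rather than merely an upper bound --- but the argument is otherwise identical to yours.
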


\begin{proof}
The underlying algorithm consists of two passes of the algorithm implied
by Lemma~\ref{Broadcast}, on a pipelined $t$-ary tree, consisting of $p$ 
leaf nodes    and at most $p$ internal nodes, for some appropriate $t$.  The 
depth of the tree is $h=\ceil{\log_{t}{p}}$.  The lemma follows by way of
arguments similar to that of Lemma~\ref{Broadcast}.
\end{proof}

For $n=1$,  $T_{\mathit{ppf}}(p) $ (respectively $C_{\mathit{ppf}}(p)$, 
$M_{\mathit{ppf}}(p)$) can be written for $T^{1}_{\mathit{ppf}}(p)$ (respectively 
$C^{1}_{\mathit{ppf}}(p)$, $M^{1}_{\mathit{ppf}}(p)$).  The same notational 
convention applies to all pipelined operations. 

\section{The Algorithms}
\label{UAlg}
\label{s5}

\subsection{BSP Deterministic Sorting Algorithm in \cite{GS96a,GS99a}}
\label{detalg}

The deterministic algorithm of the implementations is based on a 
non-iterative variant of the  sorting algorithm of \cite{GS96a,GS96f,GS99a} 
which has been shown to be one-optimal for a satisfactory range of the 
BSP parameters that includes most currently built parallel machines;  
for a wider range of these parameters the algorithm is $c$-optimal, 
where $c \geq 1$ is a small constant.  The algorithm is regular-sampling
based (\cite{SH}) but extends regular sampling to regular oversampling
and utilizes an efficient
partitioning scheme that splits -- almost evenly and independently of
the input distribution -- an arbitrary number of sorted sequences
among the processors.  In Section \ref{duplicate} this base algorithm is
augmented to handle transparently and in optimal asymptotic efficiency
duplicate keys. In our approach duplicate handling does not require
doubling of communication and/or computation time that other
approaches seem to require \cite{jaja1,jaja2,jaja3}.
The base algorithm consists of the following phases:
\begin{elist}
\item {\em Local Sorting.} Each processor, in parallel with all the
  other processors, sorts its local input sequence of size ${n/p}$ that
  resides in its local memory.
\item {\em Partitioning.} Processors cooperate to evenly split the
  sorted sequences among the processors.
\item {\em Merging.} Each processor, in parallel with all the other
  processors, merges a small number of subsequences of total size
  $(1+o(1)) (n/p)$.
\end{elist}
The iterative version of the algorithm performs $m$ iterations of
phases 2 and 3.  In each iteration the data set is partitioned into
$k$ buckets of approximately equal size. In the following iteration a
similar process of sub-partitioning each of these $k$ buckets into $k$
further buckets is performed, and so on.  The maximum number of keys
per processor in any of the $m$ iterations can be shown to be $(1 +
O(1/\omega_{n})) (n/p)$. By employing multi-way merging \cite{Knuth73} 
the desired sorted sequence can be obtained.

In Figure~\ref{SortD}, function {\sc Sort\_Det\_BSP}($X$), 
implements the sorting operation for $m=1$. 
$X$ denotes the input sequence, $n$ the size of $X$, 
$p$ the  number of processors, and $s$ is a user defined parameter 
(oversampling factor) that  inversely relates to the maximum possible 
imbalance of the sequences formed in step~(13) of the algorithm. 
For any sequence $X$, the subsequence of $X$ 
residing in processor $k$ is denoted $X^{\langle k \rangle}$.

The implemented version {\sc Sort\_Det\_BSP} as a special
case of the more general algorithm reported in  \cite{GS96a,GS96f,GS99a} 
can also be viewed as an adaptation of the regular-sampling algorithm 
of \cite{SH} on the BSP model. It differs, however, from the algorithm
in \cite{SH} in that sample-sorting is performed in parallel and
deterministic oversampling is used with the purpose of achieving finer
load balancing in communication. The choices
in deterministic oversampling that also affect load-balancing in the
subsequent key routing step are based on the quantifiable results 
obtained from the detailed analysis of \cite{GS99a}; although oversampling 
has been used previously in randomized sorting, its usefulness in 
deterministic sorting was not explored and quantified in full.

\begin{figure}[htb]
\centering
\fbox{
{\small
\begin{minipage}{\textwidth}
\begin{tabbing}
mmi \= mi \= mi \= mi \= mi \= mi \= mi \kill
{\bf begin} {\sc Sort\_Det\_BSP} ($X$) \\
1. \> {\bf let} $r =  \ceil{\omega_n} $ ; \\
2. \> {\bf for} each processor ${\langle k \rangle}$,
         $k \in \{0,\ldots,p-1\}$, in parallel \\
3. \> \> {\bf do} \> {\sc Sort\_Seq}($X^{\langle k \rangle}$) ; \\
4. \> \> \> {\bf form  } locally a sample $Y^{\langle k \rangle} = \langle
              y_{1},\ldots, y_{r p-1 } \rangle$ of $rp-1$ evenly spaced \\
   \> \> \> \> keys that partition $X^{\langle k \rangle}$ into
              $s=rp$ evenly sized segments \\
   \> \> \> \> and append the maximum of $X^{\langle k \rangle}$ 
               to this sequence ; \\
5. \> {\bf let} $\bar{Y} = \mbox{{\sc Bitonic\_Sort}($Y$)}$ ; \\
6. \> {\bf form} ${S} = \langle s_{s},\ldots, s_{(p - 1)s} \rangle$ from 
       $\bar{Y}$ that consistsof $p - 1$ evenly spaced splitters  \\
   \> \> that partition  $\bar{Y}$ into $p$ evenly sized segments ; \\
7. \> {\sc Broadcast} $({S})$; \\
8. \>  {\bf for} each processor ${\langle k \rangle}$,
         $k \in \{0,\ldots,p-1\}$, in parallel \\
9. \> \> {\bf do} \> {\bf partition} $X^{\langle k \rangle}$ into $p$
              subsequences $X^{\langle k \rangle}_{0}, \ldots,
              X^{\langle k \rangle}_{p-1}$ as induced by the \\
   \> \> \> \> $p-1$ splitters of ${S}$ ; \\
10.\> \> \> {\bf for} all $i \in \{0,\ldots,p-1\}$ \\
11.\> \> \> \> {\bf do} \> {\bf communicate} subsequence $X^{\langle k
              \rangle}_{i}$ to $Z^{\langle i \rangle}_{k}$ ; \\
12.\> \> \> {\bf let} $\bar{X}^{\langle k \rangle} =
              \mbox{{\sc Merge\_Seq}($\bigcup_{i}Z^{\langle k
              \rangle}_{i}$)}$ ; \\
13.\> \> \> {\bf return} $\bar{X}^{\langle k \rangle}$ ; \\
{\bf end} {\sc Sort\_Det\_BSP}
\end{tabbing}
\end{minipage}
}
}
\caption{{\small Procedure {\sc Sort\_Det\_BSP}.}}
\label{SortD}
\end{figure}

The proposition and proof below simplify the results shown in a more
general context in  \cite{GS96a,GS96f,GS99a}.

\begin{proposition}
\label{detsorting}
For any $n$ and $p \leq n$, and any function 
$\omega_{n}$ of $n$ such that $\omega_{n}=\Omega(1)$, 
$\omega_n = O(\lg{n})$
and $p^2 \omega_n^2 \leq n/\lg{n}$, and for
$n_{\mathit{max}} = (1 + 1/\ceil{\omega_{n}}) n/p+ \ceil{\omega_n} p $, 
algorithm {\sc Sort\_Det\_BSP}
requires time, 
$(n/p) \lg{(n/p)}+ n_{\mathit{max}}\lg{p}+
O(p+ \omega_n p \lg^2{p})$ for computation and 
$g n_{\mathit{max}}+L\lg^{2}{p}/2 + O(L+g\omega_n p \lg^2{p})$ for 
communication. 
\end{proposition}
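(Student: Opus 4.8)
The plan is to check that {\sc Sort\_Det\_BSP} sorts correctly and then to add up the cost of the steps of Figure~\ref{SortD} using the stated charging policy for local operations together with Lemma~\ref{Broadcast}; the one genuinely new ingredient is a bound on the sizes of the buckets $Z^{\langle i\rangle}=\bigcup_{k}Z^{\langle i\rangle}_{k}$ routed in step~(11). Correctness is immediate from the phase structure: after step~(3) each $X^{\langle k\rangle}$ is sorted, the $p-1$ keys of $S$ built in steps~(5)--(6) are globally sorted, step~(9) splits $X^{\langle k\rangle}$ into the $p$ groups cut out by the intervals of $S$, and step~(11) ships the $i$-th group of every processor to processor $i$; so after step~(11) processor $i$ holds, as $p$ sorted runs, exactly the input keys in the $i$-th interval of $S$, step~(12) merges them, and the outputs of processors $0,\dots,p-1$ concatenate to $X$ sorted.

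\textbf{Bucket sizes.} Write $m=n/p$ and $r=\ceil{\omega_{n}}$, and assume first that the $rp^{2}$ sampled keys are distinct (the general case is reduced to this one by the tagging of Section~\ref{duplicate}). Fix a processor $k$: its sample $Y^{\langle k\rangle}$ is $rp-1$ dividers splitting the sorted $X^{\langle k\rangle}$ into $rp$ equal segments, followed by the maximum of $X^{\langle k\rangle}$, so the $i$-th entry of $Y^{\langle k\rangle}$ occupies a rank of $X^{\langle k\rangle}$ in $[\floor{im/(rp)},\ceil{im/(rp)}]$ for $1\le i\le rp$ (the last being rank $m$). For a value $v$ let $R_{k}(v)$ be the rank of $v$ in $X^{\langle k\rangle}$ and $A_{k}(v)$ the number of entries of $Y^{\langle k\rangle}$ that are $\le v$; the divider placement gives
\[
(m/(rp))\,A_{k}(v)-1\ \le\ R_{k}(v)\ <\ (m/(rp))\,(A_{k}(v)+1)+1 .
\]
Summing over the $p$ processors, with $A(v)=\sum_{k}A_{k}(v)$ and $R(v)=\sum_{k}R_{k}(v)$ the global rank of $v$, gives $(m/(rp))A(v)-p\le R(v)<(m/(rp))(A(v)+p)+p$. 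Since $s_{i\,rp}$ is the $(irp)$-th smallest sampled key, $A(s_{i\,rp})=irp$ under distinctness, so $im-p\le R(s_{i\,rp})<im+m/r+p$; with the conventions $s_{0}=-\infty$, $s_{p\,rp}=+\infty$ the $i$-th bucket $(1\le i\le p)$ then has size $R(s_{i\,rp})-R(s_{(i-1)rp})<(im+m/r+p)-((i-1)m-p)=(1+1/\ceil{\omega_{n}})(n/p)+2p\le n_{\mathit{max}}$, using $2p\le\ceil{\omega_{n}}p$ when $\ceil{\omega_{n}}\ge2$ (and when $\ceil{\omega_{n}}=1$ the extra additive $p$ is swallowed by the $O(p)$ and $O(g\omega_{n}p\lg^{2}{p})$ terms below). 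Hence $\max_{i}|Z^{\langle i\rangle}|\le n_{\mathit{max}}$.

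\textbf{Charging the steps.} Step~(3) costs $(n/p)\lg{(n/p)}$ computation and no communication, and forming $Y^{\langle k\rangle}$ in step~(4) adds $O(rp)=O(\omega_{n}p)$ computation. In step~(5) the $rp^{2}$ samples, held $rp$ per processor, are bitonic-sorted: a local sort of $rp$ keys followed by $(\lg^{2}{p}+\lg{p})/2$ compare--exchange stages, each a pairwise exchange of $rp$ words and an $O(rp)$ merge; this is $O(\omega_{n}p\lg^{2}{p}+\omega_{n}p\lg{(\omega_{n}p)})$ computation and at most $((\lg^{2}{p}+\lg{p})/2)\,\mx{L,g\,rp}=L\lg^{2}{p}/2+O(L\lg{p}+g\omega_{n}p\lg^{2}{p})$ communication. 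Extracting the $p-1$ splitters (step~(6)) and broadcasting them, a message of length $p-1$ (step~(7), Lemma~\ref{Broadcast}), cost $O(p)$ computation and $O(gp+L\lg{p})$ communication. Step~(9) partitions the sorted $X^{\langle k\rangle}$ against the sorted list $S$ in $O(n/p+p)$ computation. Step~(11) is an $h$-relation with $h=\mx{n/p,\max_{i}|Z^{\langle i\rangle}|}=n_{\mathit{max}}$ by the bucket bound, hence $g\,n_{\mathit{max}}$ communication. Step~(12) merges $p$ sorted runs of total length at most $n_{\mathit{max}}$, for $n_{\mathit{max}}\lg{p}$ computation. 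Summing, and absorbing $O(n/p)$ and (via $p^{2}\omega_{n}^{2}\le n/\lg{n}$) $O(\omega_{n}p\lg{(\omega_{n}p)})$ into $(n/p)\lg{(n/p)}$, $O(gp)$ into $O(g\omega_{n}p\lg^{2}{p})$, and the residual $O(L\lg{p})$ into the $L\lg^{2}{p}/2$ term, yields the two stated bounds.

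\textbf{Main obstacle and use of the hypotheses.} The delicate point is the rounding in the bucket bound: the $i$-th local divider must be pinned to within an additive constant of rank $im/(rp)$, since bounding each segment separately by $\ceil{m/(rp)}$ and summing would accumulate an error of order $rp^{2}$; it is this that produces the clean $(1+1/\ceil{\omega_{n}})(n/p)+O(p)$ per-bucket bound and, with it, the exact leading coefficient $1$ on both $(n/p)\lg{(n/p)}$ and $n_{\mathit{max}}\lg{p}$. The hypotheses enter only routinely: $p^{2}\omega_{n}^{2}\le n/\lg{n}$ forces $rp^{2}\le n$, so the $rp-1$ local dividers exist and the bitonic-sort and $n_{\mathit{max}}$ contributions are of the claimed order; $\omega_{n}=O(\lg{n})$ caps the lower-order terms; and $p\le n$ ensures $n/p\ge1$.
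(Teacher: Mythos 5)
Your proposal is correct and its overall structure --- summing the per-step charges of Figure~\ref{SortD} and isolating the bucket-size bound as the one nontrivial ingredient --- mirrors the paper's proof of Proposition~\ref{detsorting}. The one place where you genuinely diverge is in how that bucket bound is established. The paper proves it as a separate statement (Lemma~\ref{balance}) by first padding the input so that every one of the $ps$ sample segments has exactly $x=\ceil{\ceil{n/p}/s}$ keys and then counting, for each splitter $s_{is}$, how many whole segments must lie entirely below and entirely above it; the difference $b_{i+1}-b_i$ then falls out as $sx+px$, giving $n_{\mathit{max}}=(1+1/\ceil{\omega_n})(n/p)+\ceil{\omega_n}p$. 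You instead avoid padding and work with the rank functions $R_k(v)$ and sample-counting functions $A_k(v)$, pinning each local divider to within an additive constant of its ideal rank and summing the resulting two-sided inequality over processors; this yields $(1+1/\ceil{\omega_n})(n/p)+2p$, which you correctly reconcile with the stated $n_{\mathit{max}}$. The two arguments are equivalent in substance --- both rest on the observation that a splitter preceded by $irp$ sample keys is preceded by roughly $im$ input keys --- but yours avoids the padding device (and the duplicate keys it introduces, which the paper must then explain away), at the cost of a slightly more delicate rounding analysis and a small case split at $\ceil{\omega_n}=1$. Two minor remarks: the paper's step~(9) also charges a $p$-fold parallel-prefix operation (Lemma~\ref{Prefix}) used to compute destination offsets for order-preserving routing, which you omit, but its cost sits inside the $O(p+\omega_n p\lg^2 p)$ and $O(L+g\omega_n p\lg^2 p)$ terms so nothing is lost; and your absorption of the residual $O(L\lg p)$ communication into the $L\lg^2 p/2$ term is exactly as loose as the paper's own accounting of the same quantity, so it is not a defect relative to the claimed bound.
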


\begin{corollary}
For $n$, $p$ and $\omega_n$ as in Proposition \ref{detsorting},
algorithm {\sc Sort\_Det\_BSP} is such that
$\pi = 1 + \lg{p}/(\ceil{\omega_n} \lg{n}) + 
O( 1/(\omega_n \lg{n})+ \lg^2{p}/(\omega_n \lg^2{n}))$, and
for $L \leq 2n/ ( p \lg^2{p} )$,
$\mu =(1+1/ \ceil{\omega_n} )g/\lg{n}+L p \lg^2{p} /(2n \lg{n}) +
O(g \lg^2{p} / (\omega_n \lg^2{n})+ 1/(\lg{n}\lg^2{p}))$ as well.
\end{corollary}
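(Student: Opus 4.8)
The plan is to read off $\pi$ and $\mu$ directly from their definitions in Section~\ref{s1}, namely $\pi = pC_{\cal{A}}/C_{\cal{A}^{*}}$ and $\mu = pM_{\cal{A}}/C_{\cal{A}^{*}}$, where $C_{\cal{A}}$ and $M_{\cal{A}}$ are the computation and communication bounds of {\sc Sort\_Det\_BSP} established in Proposition~\ref{detsorting}, and $C_{\cal{A}^{*}}=n\lg{n}$ is the cost charged to the comparing sequential sorting algorithm under the charging policy fixed in Section~\ref{s1}. Thus the whole corollary is a substitution followed by an asymptotic simplification that uses the hypotheses $\omega_n=\Omega(1)$, $\omega_n=O(\lg{n})$ and $p^2\omega_n^2\leq n/\lg{n}$ (and, for $\mu$, the additional assumption $L\leq 2n/(p\lg^2{p})$) to absorb the lower-order contributions.

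For $\pi$, first I would multiply the computation bound by $p$. The only step needing care is the middle term: since $p\,n_{\mathit{max}} = (1+1/\ceil{\omega_n})\,n + \ceil{\omega_n}\,p^2$, one gets $p\,n_{\mathit{max}}\lg{p} = (1+1/\ceil{\omega_n})\,n\lg{p} + \ceil{\omega_n}\,p^2\lg{p}$. Combining the local-sort term $n\lg{(n/p)}$ with the leading part $n\lg{p}$ of this expression and invoking $\lg{(n/p)}+\lg{p}=\lg{n}$ collapses them into $n\lg{n}$, which upon division by $C_{\cal{A}^{*}}=n\lg{n}$ gives the leading constant $1$; the surviving piece $(1/\ceil{\omega_n})\,n\lg{p}$ produces the second displayed term $\lg{p}/(\ceil{\omega_n}\lg{n})$. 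It then remains to divide the residual terms $\ceil{\omega_n}\,p^2\lg{p}$ and $O(p^2+\omega_n p^2\lg^2{p})$ by $n\lg{n}$ and check they fit the stated error. Here the constraint $p^2\omega_n^2\leq n/\lg{n}$ is the workhorse: substituting $p^2\leq n/(\omega_n^2\lg{n})$ turns $\omega_n p^2\lg^2{p}/(n\lg{n})$ into $O(\lg^2{p}/(\omega_n\lg^2{n}))$, and the remaining residuals are dominated by this together with $O(1/(\omega_n\lg{n}))$.

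For $\mu$ the computation is parallel. Multiplying the communication bound by $p$ gives $g\,p\,n_{\mathit{max}} = g(1+1/\ceil{\omega_n})\,n + g\,\ceil{\omega_n}\,p^2$, plus $pL\lg^2{p}/2$, plus $O(pL + g\,\omega_n p^2\lg^2{p})$. Dividing by $n\lg{n}$, the first and the $pL\lg^2{p}/2$ pieces produce exactly the two displayed main terms $(1+1/\ceil{\omega_n})g/\lg{n}$ and $Lp\lg^2{p}/(2n\lg{n})$. The term $g\,\omega_n p^2\lg^2{p}/(n\lg{n})$ again reduces, via $p^2\omega_n^2\leq n/\lg{n}$, to $O(g\lg^2{p}/(\omega_n\lg^2{n}))$, matching the first error term, and $g\,\ceil{\omega_n}p^2/(n\lg{n})$ is smaller still. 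The role of the extra hypothesis $L\leq 2n/(p\lg^2{p})$ is precisely to tame the $pL/(n\lg{n})$ residual: it gives $pL\leq 2n/\lg^2{p}$, hence $pL/(n\lg{n})=O(1/(\lg{n}\lg^2{p}))$, which is the second displayed error term.

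I expect the only real obstacle to be the bookkeeping of the error absorption rather than any single hard inequality: one must verify that every residual term, after the multiplication by $p$ and the division by $n\lg{n}$, is dominated by the advertised $O(\cdot)$ expression, being careful to treat $\ceil{\omega_n}$ as $\Theta(\omega_n)$ (legitimate since $\omega_n=\Omega(1)$) and to apply $p^2\omega_n^2\leq n/\lg{n}$ in the direction that maximizes the bound. A secondary point worth stating explicitly is that these same constraints keep the main terms themselves controlled, so that the corollary can feed the one-optimality discussion, $\pi=1+o(1)$ and $\mu=o(1)$, over the stated parameter ranges.
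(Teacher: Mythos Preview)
Your proposal is correct and follows exactly the approach the paper takes: the paper does not give a separate proof of the corollary but simply remarks, at the end of the proof of Proposition~\ref{detsorting}, that ``summing up all the terms for computation and communication and noting the conditions on $L$ and $g$ and assigning a cost of $n\lg{n}$ to the best sequential algorithm for sorting the result follows.'' Your write-up is in fact more explicit than the paper's, carrying out the substitution of $n_{\mathit{max}}$, the collapse $\lg{(n/p)}+\lg{p}=\lg{n}$, and the term-by-term absorption via $p^2\omega_n^2\leq n/\lg{n}$ and $L\leq 2n/(p\lg^2{p})$ that the paper leaves to the reader.
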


\begin{proof}{}
The input is assumed to be evenly but otherwise arbitrarily distributed 
among the $p$ processors before the beginning of the execution of the 
algorithm.  Moreover, the keys are distinct since in an extreme case,
we can always make them so by, for example, appending to them the code 
for their memory location. We later explain how we handle duplicate keys
without doubling (in the worst case) the number of comparisons performed.
Parameter $\omega_n$ determines the desired upper bound in processor
key imbalance during the key routing operation. The term
$1+ 1/ \ceil{\omega_n}$ is also referred to as bucket expansion in sample-sort
based randomized sorting algorithms (\cite{Blelloch91}).

In step 3, each processor sorts the keys in its possession.
As each processor holds at most $\ceil{n/p}$ keys, this step
requires  time $\ceil{n/p} \lg{\ceil{n/p}}$. 
Algorithm {\sc Sort\_Seq} is any sequential sorting algorithm of such
performance.

Subsequently, each processor selects locally $\ceil{\omega_{n}} p - 1=rp-1$ 
evenly spaced sample keys, that partition its input into 
$\ceil{\omega_{n}} p$ evenly sized segments.  Additionally, each 
processor appends to this sorted list the largest key in its input.  
Let $s = \ceil{\omega_{n}} p=rp$ be the size of the so identified list.
Therefore step 4 requires time $O(s)$.

The $p$ sorted lists, each consisting of $s$ sample keys, are merged; 
let sequence $\langle s_{1}, s_{2}, \ldots, s_{ps}\rangle$ be the result of 
the merge operation.  By assumption, the sequence is evenly distributed among 
the $p$ processors, i.e., subsequence 
$\langle s_{is+1}, \ldots, s_{(i+1)s}\rangle$, $0 \leq i \leq p-1$, 
resides in the local memory of the $i$-th processor.
The cost of step 5 is that of parallel sorting by one of
Batcher's methods \cite{Batcher68}, appropriately modified
\cite{Knuth73} to handle sorted sequences of size $s$.  
The computation and communication time required for this stage is,
respectively, $2s (\lg^{2}{p}+\lg{p})/2$ and 
$(\lg^{2}{p}+\lg{p})(L+g s)/2$.

In step 6,  a set of evenly spaced splitters is formed from the sorted
sample. A broadcast operation is initiated in step 7, where splitter $s_{is}$, 
$1 \leq i < p$, along with its index in the sequence of
sample keys is sent to all processors.
Lines 6 and 7 require time $O(1)$ and $\mx{L, g O(p)}+ 
M_{\mathit{brd}}^{p-1}(p)$ for computation and communication respectively.

In step 9, each processor decides the position of every key it holds 
with respect to the $p-1$ splitters it received in step 7, by
way of sequential merging the splitters with the input
keys in $p-1+n/p$ time or alternately by performing a binary search
of the splitters into the sorted keys in time $p \lg{(n/p)}$, 
and subsequently counts the number of keys that fall into each of the 
$p$ so identified buckets induced by the $p-1$ splitters. 
Subsequently, $p$ independent parallel prefix
operations are initiated (one for each subsequence) to determine how to
split the keys of each bucket as evenly as possible among the
processors using the information collected in the merging operation.
The $p$ disjoint parallel prefix
operations in step 9 are realized by employing the algorithm of
Lemma~\ref{Prefix} thus resulting in a time bound of
$p\lg{(n/p)}+T_{\mathit{ppf}}^{p}(p)$ for step 9.

In step 11,  each processor uses the information collected  by the
parallel prefix operation to perform the routing in such a way that
the initial ordering of the  keys is preserved 
(i.e. keys received from processor $i$ are stored before those received 
from $j$, $i<j$, and also the ordering within $i$ and $j$ is also 
preserved). Step 11 takes time $\mx{L, g \mmul n_{\mathit{max}}}$. 
  
In  step 12, each processor merges the at most $p$ sorted subsequences that 
it received in step 11.  
When this step is executed, each processor,
by way of Lemma~\ref{balance} to be shown,
possesses at most $p=\mn{p, n_{\mathit{max}}}$ sorted 
sequences for a total of at most $n_{\mathit{max}}$ keys,  
where 
$n_{\mathit{max}} =(1+1/\ceil{\omega_{n}})  (n/p) +\ceil{\omega_n} p$.
The cost of this stage is that of sequential multi-way merging
$n_{\mathit{max}}$ keys by some deterministic algorithm \cite{Knuth73}, 
which is $n_{\mathit{max}} \lg{p} $, as $\omega_n^2 p = O(n/p)$.

\bigskip
%
\noindent
Summing up all the terms  for computation and communication and noting
the conditions on $L$ and $g$ and assigning a cost of $n\lg{n}$ to the
best sequential algorithm for sorting the result follows.
\end{proof}

\medskip
It remains to prove that at the completion of  step 9 the input keys are 
partitioned into (almost) evenly sized subsequences. 
The main result is summarized in the following lemma.

\begin{lemma}
\label{balance}
The maximum number of keys $n_{\mathit{max}}$ per processor in
{\sc Sort\_Det\_BSP}
is $(1+1/\ceil{\omega_{n}})  (n/p) +\ceil{\omega_n} p$, for any 
$\omega_{n}$ such that
$\omega_{n} = \Omega(1)$ and $\omega_n = O(\lg{n})$, provided that 
$\omega_n^2 p = O(n/p)$ is also satisfied.
\end{lemma}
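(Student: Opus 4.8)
The plan is to bound the maximum number of input keys that can be assigned to any single bucket (equivalently, any single processor's output) by the splitter-selection mechanism of steps~4--9. The key object to track is how many input keys lie strictly between two consecutive global splitters $s_{is}$ and $s_{(i+1)s}$ of the sorted sample sequence $\langle s_1,\ldots,s_{ps}\rangle$. I would first fix notation: there are $p$ processors, each contributing a local sample of $r p = \ceil{\omega_n} p$ keys (the $rp-1$ evenly spaced keys plus the local maximum), so the merged sample has $ps = p\ceil{\omega_n}p$ keys in total, and the global splitters carve it into $p$ blocks of $s = \ceil{\omega_n}p$ consecutive sample keys each.

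The heart of the argument is a counting/pigeonhole estimate on one processor's contribution to one bucket. Because each processor's local sample consists of $r-1$ ``internal'' splitters chopping its sorted local list of $\le \ceil{n/p}$ keys into $rp$ equal-sized segments (plus the appended local max), at most $\ceil{n/p}/(rp) + 1$ input keys from that processor can fall between two consecutive \emph{local} sample keys. The crucial observation is that between two consecutive \emph{global} splitters there are exactly $s-1$ intervening sample keys, and these came from the $p$ local samples; by averaging, for any fixed processor, the number of its local sample keys strictly inside a given global block, call it $c_k$, satisfies $\sum_k c_k \le s - 1$ over the $p$ processors — so the number of that processor's input keys in the bucket is at most $(c_k+1)\bigl(\ceil{n/p}/(rp)+1\bigr)$ roughly; summing over processors and using $\sum_k c_k \le s-1 = \ceil{\omega_n}p - 1$ gives a bucket size of at most $(n/p)(1 + 1/\ceil{\omega_n}) + O(\ceil{\omega_n}p)$. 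I would then check that the $O(\ceil{\omega_n}p)$ additive slack is exactly $\ceil{\omega_n}p$ with the stated constants, tightening the ceilings carefully, and note that the hypothesis $\omega_n^2 p = O(n/p)$ is used only to guarantee $p \le n_{\mathit{max}}$-type consistency (so that the merge step in Proposition~\ref{detsorting} indeed merges at most $p$ sequences of total size $n_{\mathit{max}}$, and the $n_{\mathit{max}}\lg p$ and $n_{\mathit{max}}\lg(n/p)$ terms dominate their error terms).

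The main obstacle I anticipate is getting the additive term exactly right rather than as a loose $O(\cdot)$: one must be precise about the ``$+1$'' that arises each time a global block boundary falls in the interior of a local segment and about the appended local maxima, since a sloppy bound would give $2\ceil{\omega_n}p$ or $(1+2/\ceil{\omega_n})(n/p)$ instead of the claimed $(1+1/\ceil{\omega_n})(n/p) + \ceil{\omega_n}p$. The trick is to argue that across all $p$ buckets the total number of ``boundary-crossing'' events is only $p-1$ (one per global splitter), so the slack is amortized and contributes only $\ceil{\omega_n}p$ to the \emph{worst} bucket rather than to each independently; this amortization, combined with the deterministic regularity of the sampling (every local segment has \emph{exactly} the same size, unlike in randomized oversampling), is what yields the tight constant. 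I would close by remarking that $n_{\mathit{max}} = (1+1/\ceil{\omega_n})(n/p) + \ceil{\omega_n}p$ is then also an upper bound on the keys any processor \emph{sends} in step~11 after the parallel-prefix redistribution, since that redistribution only moves keys within a bucket and never increases a processor's load beyond its bucket size.
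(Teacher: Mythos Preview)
Your core argument---bounding each bucket's size by summing per-processor contributions $(c_k+1)x$ with $\sum_k c_k \le s-1$---is correct and yields the claimed bound. It is, however, a genuinely different decomposition from the paper's. The paper instead bounds the global \emph{rank} of each splitter $s_{is}$ directly: since $is$ full segments of size $x$ lie at or below $s_{is}$, its rank is at least $isx$; since $ps-is-p+1$ full segments lie at or above it (at most $p-1$ segments, one per processor, can straddle $s_{is}$), its rank is at most $psx-(ps-is-p+1)x$. Subtracting the lower bound for $b_i$ from the upper bound for $b_{i+1}$ gives $n_i \le (s+p)x$ in one line, with no per-processor bookkeeping. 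Your route makes each processor's local regularity visible; the paper's is shorter and sidesteps tracking which processor each sample key came from. Both reach the same constants.

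One caution: your ``amortization'' paragraph is off. The tight additive $\ceil{\omega_n}p$ does \emph{not} come from amortizing $p-1$ boundary events across the $p$ buckets. It falls out directly, within a single bucket, from $\sum_k(c_k+1) \le s+p$ and the segment-size bound $x \le (n/p)/s + 1$: expanding $(s+p)x$ gives $(1+1/\ceil{\omega_n})(n/p) + s + p$, and $s = \ceil{\omega_n}p$. (The residual $+p$ is present in the paper's own derivation too and is silently absorbed.) No cross-bucket argument is needed, and every bucket can simultaneously attain the bound. Two small slips: each processor's sample has $rp-1$ internal keys plus the local maximum, not $r-1$; and in your closing remark, $n_{\mathit{max}}$ bounds what a processor \emph{receives} in step~11, not what it sends---each processor sends exactly its $n/p$ input keys.
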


\begin{proof}{}
Although it is not explicitly mentioned in the description of
algorithm {\sc Sort\_Det\_BSP} we assume that we initially pad the
input so that each processor owns exactly $\ceil{n/p}$ keys. At most
one key is added to each processor (the maximum key can be such a
choice).  Before performing the sample selection operation, we also pad 
the input so that afterwards, all segments have the same number of keys 
that is, $x =\ceil{\ceil{n/p}/s}$. The padding operation requires time 
at most $O(s)$, which is within the lower order terms 
of the analysis of Proposition~\ref{detsorting}, and therefore, does not 
affect the asymptotic complexity of the algorithm.  We note that padding
operations introduce duplicate keys; a discussion of duplicate handling
follows this proof.

Consider an arbitrary splitter $s_{is}$, where $1 \leq i < p$. 
There are at least $i s x$ keys which are not larger than $s_{is}$,
since there are $i s$ segments
each of size $x$ whose keys are not larger than $s_{is}$.  Likewise, 
there are at least $(ps - i s - p + 1) x $ keys which are not smaller 
than $s_{is}$, since there are $p s - i s - p + 1$ segments each
of size $x$ whose keys are not smaller than $s_{is}$.  Thus, by noting 
that the total number of keys has been increased (by way of padding 
operations) from $n$ to $p s x$, the number of keys
$b_{i}$ that are smaller than $s_{is}$ is bounded as follows.
\[
 i s x \leq b_{i} \leq p s x - \lf( ps- i s - p + 1 \rg) \mmul x.
\]
A similar bound can be obtained for $b_{i+1}$.  Substituting
$s= \ceil{\omega_{n}} p$ we therefore
conclude the following.
\[ 
  b_{i+1} - b_{i  } \leq  sx +px-x \leq sx+px =
    \Ceil{\omega_{n}} p x + p x.
\]
The difference $n_{i} = b_{i+1} - b_{ i }$ is independent
of $i$ and gives the maximum number of keys per split sequence. 
Considering that $x \leq (n + p s)/(ps)$ and substituting $s=
\ceil{\omega_{n}} p $, the following bound is
derived.
\[
  n_{\mathit{max}} =
  \lf(1+\frac{1}{\Ceil{\omega_{n}} }\rg)
      \frac{n+ps}{p}.
\]
\noindent 
By substituting in the  nominator of the previous expression
$s= \ceil{\omega_{n}} p$,
we conclude that the maximum number of keys $n_{\mathit{max}}$ per
processor of function {\sc Sort\_Det\_BSP} is bounded above as follows.
\[
  n_{\mathit{max}} = \lf(1 + \frac{1}{\ceil{\omega_{n}}}\rg) 
    \frac{n}{p} + \ceil{\omega_{n}} p .
\]
\noindent 
The lemma follows.
\end{proof}

The particular algorithm as depicted in Figure  \ref{SortD} corresponds to
the simplest case of the deterministic algorithm in \cite{GS99a} where an
analysis for all possible values of $p$ is presented.  In the general
algorithm, the number of communication rounds is 
$\lg{n}/\lg{(n/p)}$ for any $p=n^{1-t}$, with $t$ constant, i.e. it is constant. 
If, however, $t$ is not constant, then the number of rounds becomes
$O( \lg{n} / \lg{(n/p)})$ still matching in all cases the lower bound 
of \cite{Goodrich96}. We note that in the general case, the implementation of
parallel prefix and broadcasting operations depends on $p$, $L$, $g$ and
the amount of information processed by these operations.
This highlights a difference between {\em architecture independent parallel} 
algorithm design and {\em classic parallel } algorithm design. For a 
given choice of problem size $n$, and machine 
i.e. for a given tuple $(n,p,L,g)$, the resulting algorithm may differ from 
that chosen for another tuple $(n^\prime ,p^\prime ,L^\prime ,g^\prime )$. 
In the case of sorting for example, one algorithm may implement
sample sorting sequentially while another one in parallel, one algorithm
may implement a parallel prefix or broadcasting operation using a PRAM
approach in $\lg{p}$ supersteps while another algorithm may
perform the same operations in constant number of supersteps as in
Lemma \ref{Broadcast} or \ref{Prefix}. 

\subsubsection{Duplicate-key Handling}
\label{duplicate}
\label{sd}

Algorithm {\sc Sort\_Det\_BSP}, as described, does not handle 
duplicate keys. A naive way
to handle duplicate keys is by making the keys distinct. This could 
be achieved
by attaching to each key the address of the memory location it is
stored in.  For data types whose bit or byte length is comparable to
the length of the address describing them,
such a transformation leads -- in most cases -- to a doubling of the 
overall number of comparisons performed and the communication time
in the worst case.  For
more complex data types such as strings of characters the  extra cost
may be negligible.

An alternative way to handle duplicate keys is the following
one that was also used in the implementations and handles duplicate keys
in a transparent way that provides asymptotic optimal efficiency and
tags only a small fraction of the keys. This seems to be an improvement
over other approaches \cite{jaja1,jaja2,jaja3} that require a doubling
of communication time.
Procedure {\sc Sort\_Seq} is implemented
by means of a stable sequential sorting algorithm. 
Two tags for each input key are already implicitly available by default,
and no extra memory is required to access them.
These are the processor identifier that stores a particular input key
and the index of the key in the local array that stores it. No additional
space is required for the maintenance of this tagging.
In our duplicate-key handling method such tags are only used for sample
and splitter-related activity.

As sample sorting is a global operation,
for sample sorting and splitter distribution only we augment
every sample key 
into a record that includes this additional tag information
(array index and processor storing the key).
As the additional tagging information affects the sample only,
and the sample is $o(1)$ of the input keys, the memory overhead
incurred is small, as is the computational overhead.
The attached tag information is used in  step  4  to form the
sample, in step 5 for sample sorting, in steps 6  and 7 
for splitter selection and broadcasting, and finally in step 9 as all
these steps require distinct keys to achieve stability and
load-balance. In step 9 in particular, a binary search operation 
of a splitter key into the locally sorted keys involves  first
a  comparison of the two keys. If the keys are equal the 
result of the comparison is resolved by comparing the readily 
(and implicitly) available identifier of the processor that holds 
the local key to the processor storing the comparing splitter 
(available through the tagging in step 4, and the broadcasting of step 7). 
If the two processor identifiers are equal,
then the result of the comparison is determined by comparing the
indices of the position in the local array that stores the
local key and the splitter being compared.

In addition, the merging operation must also be performed in a 
stable manner, that is if, the keys at the head of two sorted sequences
are equal the one received from processor $i$ is appears before the one 
received from processor $j$, $i<j$ in the output sequence of the operation.  

The computation and communication overhead of duplicate handling that 
is described by this method 
is within the lower order terms of the analysis
and therefore, the optimality claims still hold unchanged.
The results on key imbalance still hold as well.
This same duplicate handling method is also used in the
implementation of algorithm {\sc Sort\_IRan\_BSP}.

\subsection{BSP Randomized Sorting Algorithm in \cite{GV94}}

Randomized BSP sorting was introduced in \cite{GV94}.
An architecture independent analysis of the algorithm in \cite{GV94} in
terms of problem size $n$ and $p, L$ and $g$ shows
that it is one-optimal for most cases of interest.
The algorithm derives from quicksort, the ideas of 
\cite{Frazer70,Reischuk85} and the technique of oversampling 
\cite{Reif87}. It achieves the claimed efficiency as follows.

(1). The algorithm satisfies the requirement of one-optimality by using
oversampling. The oversampling factor is in general $\omega (\lg{n})$
and a practical choice that is being used in the experiments is
$\Theta (\lg^2{n})$.

(2). As the size of the sample is smaller than input size $n$
and parallel sampling is used, sample sorting can be performed
either sequentially (by sending the sample to a single processor and sorting
locally) or, as noted in \cite{GV94}, recursively, or by employing a 
non-optimal but in practice faster 
parallel algorithm such as  Batcher's bitonic or odd-even merge sort
as any of the latter two is  simpler to implement and incurs low 
overhead costs 
for small problem instances.

(3). At every recursive call of classic quicksort an input sequence 
is split into two subsequences; a naive parallel implementation of 
quick sort would require in the best case $\lg{n}$ communication rounds, 
one for each recursive call, each round requiring $O(gn/p)$ time for 
communication. At the end of the $\lg{p}$-th round, 
the input is split into $p$ segments which is equal to the number 
of available processors. In a well-designed parallel implementation
of quick sort, communication is only incurred in the first $\lg{p}$ 
rounds for a total of $O(g n \lg{p}/p)$.
Total computation time is, however, $O(n \lg{n} /p)$.  
If $p$ polynomially related to $n$, i.e. $p = n^{1-t}$, where 
$0< t < 1$ is a constant, such an implementation is still inefficient as  
communication time is of the order of computation time 
even for $g=O(1)$.

(4). Based on ideas of \cite{Frazer70} it then makes sense to split the input
sequence into $k$ sets, where $k > 2$, by choosing $k-1$ splitters at
every phase. This way the number of communication rounds 
is reduced to  $\lg{p}/\lg{k}$. For $p=n^{1-t}$ as before, by choosing $k$ 
so that $k=p^{t}$, the number of communication rounds is then 
$\lg{n}/\lg{(n/p)}$. For constant $t$, this is constant
and therefore, communication time is smaller ($O(g n/p)$) than before. 
This observation is used in the randomized algorithm of 
\cite{GV94} that describes the various cases of interest:
(a) $p \leq \sqrt{n}$ that is of practical interest, (b) $p= n^{1-t}$, where
$t$ is constant and the algorithm is still interesting in terms of its
practical implications and (c) for all other cases, $p$ can grow as large
as $p=O(n / \lg^{1+a}{n})$, for any constant $a>0$ and 
one-optimality can still be maintained. 
For most practical applications the number $m$ of communication
rounds is one (case (a)) or in some extreme cases at most 2 
(case (b)). Which of the algorithms in (a) or (b) applies depends
on the values of $n$, $p$, $L$ and $g$.

Although partitioning and oversampling in the context of sorting
are well established techniques, the analysis in \cite{GV94} summarized in 
Claim \ref{Sampling1} below allows one to prove the one-optimality of
the sorting algorithm by quantifying precisely the key imbalance among
the processors.
Let $X = \langle x_{1}, x_{2}, \ldots , x_{N} \rangle$ be an ordered 
sequence of keys indexed such that $x_{i} < x_{i+1}$, for all 
$1 \leq i  \leq N-1$. The implicit assumption is that keys are unique. 
Let
$Y = \{y_{1}, y_{2}, \ldots, y_{ks-1}\}$ be a randomly chosen subset of $ks-1
\leq N$ keys of $X$ also indexed such that $y_{i} < y_{i+1}$, for all
$1 \leq i  \leq ks-2$,  for some positive integers $k$ and $s$. Having
randomly selected set $Y$, a partitioning of  $X - Y$ into $k$ 
subsets, $X_{0}, X_{1}, \ldots, X_{k-1}$ takes place.
The following result shown in \cite{GV94} 
is independent of the distribution of the input keys.

\begin{claim}
\label{Sampling1}
Let $k \geq 2$, $s \geq 1$, $ks < N/2$, $n \geq 1$,  $0 < \varepsilon < 1$,
$\rho >0$, and

\[
  s \geq \frac{1+\varepsilon}{\varepsilon^{2}} \lf( 2 \rho \log{n} + 
         \log{(2 \pi k^{2}(ks-1)e^{1/(3(ks-1))})} \rg).
\]
\noindent
Then the probability that any one of the $X_{i}$, for all $i$, 
$0 \leq i \leq k-1$, is of size more than 
$\ceil{(1+\varepsilon)(N-k+1)/k}$ is at most $n^{-\rho}$.
\end{claim}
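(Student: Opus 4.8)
\emph{The plan} is to cut the problem down to a single output set, recognize its size as a hypergeometric quantity, bound the relevant tail, and close with a union bound over the $k$ sets $X_0,\dots,X_{k-1}$. The reduction rests on a symmetry: a uniformly random $(ks-1)$-subset $Y$ of the sorted sequence $X$ partitions $X$ into $ks$ ``gaps'' --- the runs of non-sample keys strictly between consecutive sample keys, together with the run below $y_1$ and the run above $y_{ks-1}$ --- and the vector of gap sizes is a uniformly random composition of $N-(ks-1)$ into $ks$ nonnegative parts. Each $X_i$ is exactly the union of the non-sample keys lying in a block of $s$ consecutive gaps, and these $k$ blocks partition the $ks$ gaps. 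Hence $X_0,\dots,X_{k-1}$ are identically distributed, and with $M=\ceil{(1+\varepsilon)(N-k+1)/k}$ we get $\Pr[\exists i:|X_i|>M]\le k\,\Pr[|X_0|>M]$.

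Next I would reformulate $|X_0|>M$ hypergeometrically. Since $|X_0|=\rank(y_s)-s$, the event $|X_0|>M$ is precisely the event that at most $s-1$ of the first $M+s$ keys of $X$ belong to $Y$; that count is a hypergeometric random variable $H$ on a population of $N$ with $ks-1$ ``good'' items and sample size $M+s$, so it remains to bound $\Pr[H\le s-1]$. Its mean is $\mathbf{E}[H]=(M+s)(ks-1)/N$, and this is where the appearance of $N-k+1$ in the statement (rather than the exact bucket average $N-ks+1$) earns its keep: one has $M+s\ge(1+\varepsilon)(N-k+1)/k+s$, and for $s\ge 2$ the gap between $(N-k+1)/k+s$ and $(N+1)/k$ is exactly what is needed so that, after the $O(1/k)$ and $s/N$ corrections controlled by the hypothesis $ks<N/2$, one gets $\mathbf{E}[H]\ge(1+\varepsilon)s$. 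Thus $\{H\le s-1\}$ asks $H$ to fall to about a $(1-\varepsilon/(1+\varepsilon))$ fraction of its mean.

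For the lower tail of $H$ the quick route is a Chernoff/Hoeffding estimate $\Pr[H\le(1-\delta)\mathbf{E}[H]]\le\exp(-\delta^2\mathbf{E}[H]/2)$ with $\delta=\varepsilon/(1+\varepsilon)$, which already gives $\Pr[\exists i:|X_i|>M]\le k\exp(-s\varepsilon^2/(2(1+\varepsilon)))$, hence $\le n^{-\rho}$ once $s\ge\frac{1+\varepsilon}{\varepsilon^2}\bigl(2\rho\log n+\log k^2\bigr)$. To land the displayed constant exactly --- including the $2\pi$ and the $e^{1/(3(ks-1))}$ --- I would instead bound $\Pr[H\le s-1]$ by hand: the pmf of $H$ is log-concave and strictly increasing over $\{0,\dots,s-1\}$ (its mode sitting a factor $\approx 1+\varepsilon$ higher), so $\Pr[H\le s-1]$ is dominated by a geometric series with ratio bounded away from $1$, hence by a bounded multiple of its top term $\Pr[H=s-1]$, a single ratio of binomial coefficients. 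Estimating that ratio with the two-sided Stirling inequality $\sqrt{2\pi m}\,(m/e)^m\le m!\le\sqrt{2\pi m}\,(m/e)^m e^{1/(12m)}$ produces the rate $\exp(-s\varepsilon^2/(2(1+\varepsilon)))$ times a polynomial-in-$ks$ correction which, after crudely bounding the ancillary coefficients, is absorbed into the factor $2\pi k^2(ks-1)e^{1/(3(ks-1))}$; matching $k$ times this estimate against $n^{-\rho}$ yields exactly the stated lower bound on $s$.

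The conceptual ingredients --- the gap symmetry, the hypergeometric rewrite, one lower-tail bound, the union bound over $k$ sets --- are routine; the real work is in the last step. The main obstacle is threading the non-asymptotic Stirling estimates through the several binomial coefficients and the geometric-series domination so that all lower-order factors collapse to precisely $2\pi k^2(ks-1)e^{1/(3(ks-1))}$, together with the bookkeeping in the mean computation needed to confirm that $N-k+1$ and $ks<N/2$ really do deliver $\mathbf{E}[H]\ge(1+\varepsilon)s$ up to corrections that this factor can swallow.
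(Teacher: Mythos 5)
The paper does not actually prove Claim~\ref{Sampling1}: it is quoted verbatim from the reference [GV94] (``The following result shown in \cite{GV94}\dots''), so there is no in-paper proof to compare against. Your reconstruction --- exchangeability of the gap composition to reduce to one bucket, the rewrite of $\{|X_0|>M\}$ as a hypergeometric lower tail on the first $M+s$ keys, a tail bound sharpened by Stirling, and a union bound over the $k$ buckets --- is the right skeleton and is consistent with the specific constant in the hypothesis: requiring $s\ge\frac{1+\varepsilon}{\varepsilon^2}(2\rho\log n+\log(2\pi k^2(ks-1)e^{1/(3(ks-1))}))$ is exactly the rearrangement of $k\sqrt{2\pi(ks-1)}\,e^{1/(6(ks-1))}\exp\bigl(-\varepsilon^2 s/(2(1+\varepsilon))\bigr)\le n^{-\rho}$, i.e.\ a union-bound factor $k$ times a Stirling prefactor times the expected exponential rate.

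One intermediate assertion should be corrected, though it does not sink the argument. The inequality $\mathbf{E}[H]\ge(1+\varepsilon)s$ is false in general: with $M\ge(1+\varepsilon)(N-k+1)/k$ one gets $\mathbf{E}[H]=(M+s)(ks-1)/N\ge(1+\varepsilon)(s-\tfrac1k)(1-\tfrac{k-1}{N})$, and the $-(1+\varepsilon)/k$ deficit is not recouped by the $s(ks-1)/N$ term when $N\gg k^2s^2$. What is true, and is all you need, is $\mathbf{E}[H]\ge(1+\varepsilon)(s-1)$ --- using $k\ge2$ and $ks<N/2$ to absorb the $1/k$ and $(k-1)/N$ corrections --- so that the event $\{H\le s-1\}$ is $\{H\le\mathbf{E}[H]/(1+\varepsilon)\}$ and the deviation parameter $\delta=\varepsilon/(1+\varepsilon)$ comes out as claimed. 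The remaining exposure in your sketch is the one you already flag: the geometric-series domination of the tail by $\Pr[H=s-1]$ and the Stirling estimates must be threaded so the polynomial prefactor lands on exactly $\sqrt{2\pi k^2(ks-1)e^{1/(3(ks-1))}}$; a generic execution of your plan proves the claim with a possibly different (larger) constant inside the logarithm, and reproducing the stated one requires following the computation of [GV94] line by line.
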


Algorithm {\sc Sort\_Ran\_BSP} describes case (4)(a) of \cite{GV94},
and this special case is widely referred to as sample-sort in the
literature.
$X$ denotes the input key sequence, $n$ the size of $X$, 
$p$ the number of processors, and $s$ is a user defined parameter 
(oversampling factor) that inversely relates to the maximum key imbalance 
of the split sequences formed in step~(13) of the algorithm.

\begin{figure}[htb]
\centering
\fbox{
{\small
\begin{minipage}{\textwidth}
\begin{tabbing}
mmi \= mi \= mi \= mi \= mi \= mi \= mi \kill
{\bf begin} {\sc Sort\_Ran\_BSP} ($X$) \\
1. \> {\bf let} $s = 2\omega_n^2 \lg{n}$ ; \\
2. \> {\bf select} uniformly at random a sample $Y = \langle y_{1}, \ldots,
              y_{sp - 1} \rangle$ of $sp - 1$ keys ; \\
3. \> {\bf communicate} $Y$ to processor ${\langle 0 \rangle}$ ; \\
4  \> if processor ${\langle 0 \rangle}$ {\bf then} \\
5. \> \> {\bf do} \> {\bf let} $\bar{Y} = \mbox{{\sc Sort\_Seq}($Y$)}$ ; \\
6. \> \> \> {\bf form} locally a set  $S = \langle s_{1},\ldots, s_{p - 1}
              \rangle$ of $p - 1$ evenly spaced splitters \\
   \> \> \> \> that partition $\bar{Y}$ into $p$ evenly sized segments ; \\
7. \> \> \> {\sc Broadcast}($S$) ; \\
8. \>  {\bf for} each processor ${\langle k \rangle}$,
         $k \in \{0,\ldots,p-1\}$, in parallel \\
9. \> \> {\bf do} \> {\bf partition} $X^{\langle k \rangle}$ into $p$
              subsets $X^{\langle k \rangle}_{0}, \ldots,
              X^{\langle k \rangle}_{p-1}$ as induced by the $p-1$\\
   \> \> \> \> splitters of $S$ ; \\
10.\> \> \> {\bf for} all $i \in \{0,\ldots,p-1\}$ \\
11.\> \> \> \> {\bf do} \> {\bf communicate} subset $X^{\langle k
              \rangle}_{i}$ to $Z^{\langle i \rangle}_{k}$ ; \\
12.\> \> \> {\bf let} $\bar{X}^{\langle k \rangle} =
              \mbox{{\sc Sort\_Seq}($\bigcup_{i}Z^{\langle k
              \rangle}_{i}$)}$ ; \\
13.\> \> \> {\bf return} $\bar{X}^{\langle k \rangle}$ ; \\
{\bf end} {\sc Sort\_Ran\_BSP}
\end{tabbing}
\end{minipage}
}
}
\caption{{\small Procedure {\sc Sort\_Ran\_BSP}.}}
\label{SortR}
\end{figure}

\begin{proposition}
\label{ransorting}
For any $\omega_n , n , p$ such that $2p \omega_n^2 \lg{p} <n/2$, $p^2 \leq n$,
and $L=o(n/(p \lg^2{p}))$, algorithm {\sc Sort\_Ran\_BSP} has
$\pi = 1+ 1/\omega_n + 1/\lg{n}+ 2p^2 \omega_n^2 \lg{p}/n+
o(1/\lg{n})$ and $\mu = O(g p^2 \omega_n^2 /n)+O(g/\lg{n})+o(g/\lg{n})$.
\end{proposition}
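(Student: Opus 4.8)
The plan is to mirror the proof of Proposition~\ref{detsorting}: charge the computation and communication incurred by each numbered line of {\sc Sort\_Ran\_BSP} against a sequential comparison sort of cost $C_{{\cal A}^{*}}=n\lg n$, and then read off $\pi=pC_{{\cal A}}/(n\lg n)$ and $\mu=pM_{{\cal A}}/(n\lg n)$. As before one may assume the $n/p$ keys per processor are distinct (tag a key by its position; duplicate handling is deferred to Section~\ref{duplicate}), which is what lets Claim~\ref{Sampling1} be applied. First dispose of the cheap lines: line~1 is $O(1)$; line~2 draws $sp-1$ random keys in $O(s)$ time per processor; line~3 ships $sp-1$ words to processor~$0$ at communication cost $\mx{L,g(sp-1)}$; lines~5--6 run on processor~$0$ only, sorting the sample in $sp\lg(sp)$ time and extracting $p-1$ splitters in $O(p)$ time; line~7 broadcasts $p-1$ words, which by Lemma~\ref{Broadcast} (e.g.\ with $t=2$) costs $M^{p-1}_{\mathit{brd}}(p)=O(gp)+O(L\lg p)$; line~9, at every processor, binary-searches the $p-1$ splitters into its $n/p$ keys ($(n/p)\lg p$ comparisons) and then forms and counts the $p$ buckets in a further $O(n/p+p)$ time.

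The heart of the argument is lines~11--12, whose cost is governed by the largest bucket. Apply Claim~\ref{Sampling1} with $N=n$, $k=p$, $s=2\omega_n^2\lg n$ and $\varepsilon=1/\omega_n$: then $(1+\varepsilon)/\varepsilon^2=\omega_n^2+\omega_n\le 2\omega_n^2$, while, because $p^2\le n$ and $ks-1=ps-1\le n$, the bracket $2\rho\log n+\log\!\big(2\pi k^2(ks-1)e^{1/(3(ks-1))}\big)$ is at most $(2\rho+\frac32+o(1))\lg n$, so for a sufficiently small constant $\rho>0$ the hypothesis of the Claim is met. Hence with probability at least $1-n^{-\rho}$ no bucket --- equivalently, no processor's load after line~11 --- exceeds $n_{\mathit{max}}=(1+1/\omega_n)(n/p)+O(\omega_n^2\lg n)$, the $O(\omega_n^2\lg n)$ accounting for the at most $s$ sample keys lying between two consecutive splitters. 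Since $L=o(n/(p\lg^2 p))\ll gn/p$, line~11 then costs $\mx{L,gn_{\mathit{max}}}=gn_{\mathit{max}}$ and line~12 costs $n_{\mathit{max}}\lg n_{\mathit{max}}$.

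Now sum. For computation $C_{{\cal A}}=(n/p)\lg p+n_{\mathit{max}}\lg n_{\mathit{max}}+sp\lg(sp)+(\mbox{lower-order terms})$; using $n_{\mathit{max}}\lg n_{\mathit{max}}=(1+1/\omega_n)(n/p)\lg(n/p)+O((n/p)/\omega_n)+(\mbox{l.o.t.})$ and $\lg(n/p)+\lg p=\lg n$, the first two terms amount to $(1+1/\omega_n)\,n\lg n/p$ up to lower order, the $O(n/p)$ linear overheads of lines~9 and~12 produce the $1/\lg n$ summand, and (writing $\lg(sp)=\lg p+O(\lg\lg n)$, or more crudely $\lg(sp)=O(\lg n)$) the term $sp\lg(sp)$ produces $2p^2\omega_n^2\lg p/n$ plus lower order; dividing by $n\lg n/p$ gives the stated $\pi$. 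For communication $M_{{\cal A}}=gn_{\mathit{max}}+g(sp-1)+M^{p-1}_{\mathit{brd}}(p)+(\mbox{l.o.t.})$; the term $gn_{\mathit{max}}$ contributes $g(1+1/\omega_n)/\lg n=O(g/\lg n)+o(g/\lg n)$, the term $g(sp-1)=2g\omega_n^2 p\lg n$ contributes $2gp^2\omega_n^2/n=O(gp^2\omega_n^2/n)$, and every $L$-charge is $o(g/\lg n)$ once $L=o(n/(p\lg^2 p))$ is invoked; this gives the stated $\mu$.

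The one genuinely non-routine point is the parameter verification for Claim~\ref{Sampling1} in the second paragraph: one must confirm that $s=2\omega_n^2\lg n$ is large enough relative to $\varepsilon=1/\omega_n$, which reduces to the constant multiplying $\lg n$ inside $\log\!\big(2\pi k^2(ks-1)\cdots\big)$ being strictly below $2$; this is precisely where $p^2\le n$ is used (it caps $\log k^2$ by $\lg n$ and $\log(ks)$ by roughly $\frac12\lg n$), and it fixes the admissible values of the failure exponent $\rho$. Everything else is bookkeeping of lower-order terms under the two standing hypotheses, $2p\omega_n^2\lg p<n/2$ (which keeps the sample handled on processor~$0$ a lower-order --- though not necessarily $o(1)$ --- contribution to $\pi$) and $L=o(n/(p\lg^2 p))$ (which absorbs all synchronization charges).
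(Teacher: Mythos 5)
Your proof follows essentially the same route as the paper's: line-by-line cost accounting, an invocation of Claim~\ref{Sampling1} with $k=p$, $\varepsilon=1/\omega_n$ to bound the per-processor load at $(1+1/\omega_n)n/p$ plus lower order, and a final division by $n\lg n/p$ — and your explicit verification of the Claim's hypothesis (where $p^2\le n$ caps the logarithmic bracket and forces $\rho<1/4$) is actually more careful than the paper, which simply asserts the conclusion. One small slip in wording: in step~9 of {\sc Sort\_Ran\_BSP} the local keys are \emph{unsorted} (local sorting happens only in step~12), so the search must go each key into the $p-1$ sorted splitters rather than ``the splitters into the keys'' as you describe it; the cost $(n/p)\lg p$ you charge is nevertheless the correct one and matches the paper's $(n/p)(\lg p+1)$.
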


\begin{proof}
For the sake of completeness we use    elements of the proof 
of \cite{GV94} to illustrate the performance of the algorithm for the
specific choice of splitters $k=p-1$. 

Step 2 of the algorithm requires parallel time $O(L+gsp)$ per processor.
As shown in \cite{GV94}, processors select
uniformly at random processor identifiers $0 \ldots p-1$
and send  these identifiers to the identified processors. Chernoff
bound techniques can be used to show that each processor sends
or receives $O(s)$ identifiers for communication time $O(L+gs)$.
Processors then select uniformly at random and without replacement a number 
of keys equal to the number of identifiers received previously, a step that
takes $O(s)$ parallel time. Since all sample keys are then communicated 
to processor 0 this step would require $O(L+gsp)$ time.
Step 5 requires time $sp \lg{(sp)}$ and
splitter selection in step 6 takes $O(p)$ time.


The broadcasting in step 7 takes $O(L+gp)$ time as it will take place 
in one superstep. Each processor $k$ then performs a binary search of its
set $X^{\langle k \rangle}$ of $n/p$ input keys  onto the $p-1$ splitters to
determine sets
$X^{\langle k \rangle}_0  \ldots X^{\langle k \rangle}_{p-1}$. 
This requires at most  $(n/p)(\lg{p}+1)$ comparisons. 
By claim \ref{Sampling1}, with probability $1-o(1)$, each of 
$\bar{X}^{\langle k \rangle}$  is of size at most $(1+1/\omega_n ) n/p$.
Therefore, in  step 11, each processor sends
$n/p$ and receives at most $(1+1/\omega_n ) n/p$ keys for communication time
$O(L+g(1+1/\omega_n ) n/p)$ and computation time in step 12 of
$(1+1/\omega_n ) n/p \lg{(n/p)} +o(n/p)$ for local sorting, 
as $\ln{(1+x)} \leq x$, $x<1$.

The total parallel computation time of {\sc Sort\_Ran\_BSP} is 
$(1+1/\omega_n ) n \lg{n}/p +(n/p) +sp\lg{(sp)}+o(n/p)+O(L+p)$ and 
communication time is $O(g n /p + gps +L)$. 
If we compare the parallel
algorithm  to the best sequential algorithm that requires 
$n \lg{n}$ comparisons
and noting that $L=o(n /(p\lg^2{p}))$, the claimed bounds on $\pi$ and
$\mu$ are derived.
%
%
\end{proof}

\medskip
{\sc Sort\_Ran\_BSP} maintains an oversampling factor 
$2 \omega_n^2 \lg{n}$ that is $\Omega ( \lg{n})$.
\medskip

An implementation of {\sc Sort\_Ran\_BSP} of Figure~\ref{SortR} 
is straightforward except perhaps that of step 9. 
In step 9, set $X^{\langle k \rangle}$ is split into $p$ sets such 
that the keys of the $i$-th set are routed to processor $i$; the sets 
are determined by a binary search operation of each key into the set of
$p-1$ splitters. 
The formation  of the $p$ sets in step 9 is equivalent
to an integer sort operation with key the result of the binary search 
operation (i.e. destination processor).
Such set formation operation is thus of linear time $Dn/p$; constant $D$ is
significant however since it includes the cost of copying keys  in
memory so that keys with the same destination processor are stored
in contiguous memory locations.
For relatively small values of $n$, constant $D$ has the same 
growth as $\lg{n}$, and therefore, asymptotic claims may not be valid. 
In addition it seems that the sorting operation of
step 5  could be done in parallel rather than sequentially.

These observations were taken into consideration in the design
of the randomized BSP sorting algorithm of the implementations so that 
its performance is fully optimized.  The resulting algorithm
is {\sc Sort\_IRan\_BSP}. It is this algorithm that was implemented
rather than  sample-sort {\sc Sort\_IRan\_BSP}.
Note that {\sc Sort\_IRan\_BSP} differs from most traditional
sample-sort randomized parallel sorting algorithms in that it
follows the pattern of local sorting, sample and splitter selection,
key routing and local merging rather than the traditional one of
sample and splitter selection, key routing and local sorting that
identifies traditional sample-sort based randomized parallel sorting algorithms.
In particular, in the design and implementation of 
{\sc Sort\_IRan\_BSP} we  addressed 
the first limitation related to step 9 of {\sc Sort\_Ran\_BSP}.
To this end we employed ideas from the deterministic algorithm 
\cite{GS96a,GS96f} described in {\sc Sort\_Det\_BSP}. 
In particular, we sorted the input keys before
realizing the communication in step~(11) and before we performed the sampling
operation. This requires the replacement of the sequential sorting algorithm 
in step~(12) of {\sc Sort\_Ran\_BSP} by 
a multi-way merging algorithm as the received
sequences from at most $p-1$ processors are already sorted.  
A binary search in step~(9) is not required any more as we can merge the
local sorted keys with the sorted splitters  or perform a binary
search of the splitters into the keys an operation that
allows for coarse-grained communication in step~(11). The resulting
algorithm looks similar to {\sc Sort\_Det\_BSP} and thus, 
attains performance at least comparable to that of the corresponding 
deterministic algorithm.  Random sampling, however, gives the
programmer more freedom to determine processor imbalance ($\omega_n$ in
the deterministic case is $O(\lg{n})$; there is no such limitation in the
randomized case).
For handling duplicate keys we employed the
method of the deterministic algorithm implementation described in
Section \ref{duplicate} which effectively
tags few keys only (sample keys) and thus uses asymptotically as much memory
as the non-duplicate handling variant. Algorithm
{\sc Sort\_IRan\_BSP} is outlined in Figure \ref{SortRI}. 

\begin{figure}[htb]
\centering
\fbox{
{\small
\begin{minipage}{\textwidth}
\begin{tabbing}
mmi \= mi \= mi \= mi \= mi \= mi \= mi \kill
{\bf begin} {\sc Sort\_IRan\_BSP} ($X$) \\
1. \> {\bf let} $s = 2 \omega_n^2 \lg{n}$ ; \\
2. \> {\bf for} each processor ${\langle k \rangle}$,
         $k \in \{0,\ldots,p-1\}$, in parallel \\
3. \> \> {\bf do} \> {\sc Sort\_Seq}($X^{\langle k \rangle}$) ; \\
4. \> \> \> {\bf select} uniformly at random a sample $Y = \langle y_{1}, \ldots,
              y_{sp - 1} \rangle$ of $sp - 1$ keys ; \\
5. \> {\bf let } $\bar{Y}= Bitonic\_Sort (Y)$. \\
6. \> {\bf for} each processor ${\langle k \rangle}$,
         $k \in \{0,\ldots,p-1\}$, in parallel \\
7. \> \> \>{\bf form} set  $S = \langle s_{1},\ldots, s_{p - 1} \rangle$ 
         of $p - 1$ evenly spaced splitters \\
   \> \> \> \> that partition $\bar{Y}$ into $p$ evenly sized segments ; \\
8. \> \> \>{\bf communicate} the splitters to processor 0. \\
9. \> {\sc Broadcast}($S$) ; \\
10. \>  {\bf for} each processor ${\langle k \rangle}$,
         $k \in \{0,\ldots,p-1\}$, in parallel \\
11. \> \> {\bf do} \> {\bf partition} $X^{\langle k \rangle}$ into $p$
              subsequences $X^{\langle k \rangle}_{0}, \ldots,
              X^{\langle k \rangle}_{p-1}$ as induced by the \\
   \> \> \> \> $p-1$ splitters of ${S}$ ; \\
12.\> \> \> {\bf for} all $i \in \{0,\ldots,p-1\}$ \\
13.\> \> \> \> {\bf do} \> {\bf communicate} subsequence $X^{\langle k
              \rangle}_{i}$ to $Z^{\langle i \rangle}_{k}$ ; \\
14.\> \> \> {\bf let} $\bar{X}^{\langle k \rangle} =
              \mbox{{\sc Merge\_Seq}($\bigcup_{i}Z^{\langle k
              \rangle}_{i}$)}$ ; \\
15.\> \> \> {\bf return} $\bar{X}^{\langle k \rangle}$ ; \\
{\bf end} {\sc Sort\_IRan\_BSP}
\end{tabbing}
\end{minipage}
}
}
\caption{{\small Procedure {\sc Sort\_IRan\_BSP}.}}
\label{SortRI}
\end{figure}

Compared to
{\sc Sort\_Ran\_BSP}, in {\sc Sort\_IRan\_BSP} local sorting is performed
first on a set of $n/p$ keys, not $(1+ 1/\omega_n)n/p$.
Binary search of the input keys into the splitters can be simplified
by merging the two in linear time or performing a binary search of the
splitters into the input key set, a less expensive operation; the latter
operation was implemented in the code of our experiments. 
Communication is simpler, as by the initial 
sorting and the binary-search operation of the splitter keys
into the sorted local input keys,
each processor is able to communicate a contiguous block of 
keys to every other processor.  Because each processor receives
sorted sequences a multi-way merge operation
at the conclusion of the algorithm is required for the same asymptotic 
cost in terms of comparisons performed to the binary search operation
of the keys into the splitter that has become unnecessary.

Computation time of {\sc Sort\_IRan\_BSP} is
$n/p \lg{(n/p)}+ (1+1/\omega_n) n \lg{p}/p + 2 \omega_n^2 \lg{n}\lg^2{p}+
O(p \lg{(n/p)}+\omega_n^2 \lg{n} \lg{p})$ and
communication time is 
$(1+1/\omega_n )ng/p +g \omega_n^2 \lg{n} \lg^2{p}+ L\lg^2{p}/2 +
O(L\lg{p}+ g \omega_n^2 \lg{n}\lg{p}+pg)$. 
The first two terms of computation time are due to local sorting and
multi-way merging respectively, and the 
third term is due to parallel sample sorting. 
The first term of communication time is due to key
routing, the second term is due to parallel sample sorting, and the
third term reflects the number of synchronization operations required
for parallel (Batcher-based) sample sorting.
The terms hidden in the $O(\cdot)$ notation describe contributions of the
remaining auxiliary operations. For example, in computation time the first
term in $O(\cdot)$ reflects the cost of binary search of the splitters
into the local keys, and the second term low order contributions of parallel
(Batcher based) sample sorting.
Similarly for communication time the first and seconds terms in $O(\cdot)$ 
reflect lower order contributions in parallel sample sorting and the third
term the cost of splitter broadcasting.
Therefore for $p^2 \leq n /(\omega_n \lg{n}) $  
and $2p \omega_n^2 \lg{p} < n/2$,
and $L \leq 2n /(p \lg^2{p})$, we conclude that
$\pi = 1+ \lg{p}/(\omega_n \lg{n}) + 2p\omega_n^2 \lg^2{p}/n
+O(1/\omega_n \lg{n}+\omega_n^{3/2} \lg{p} / \sqrt{n \lg{n}} )$ 
and
$\mu = (1+1/\omega_n )g/\lg{n}+gp\omega_n^2 \lg^2{p}/n+ 
L p \lg^2{p}/(2n\lg{n})+ O(1/\lg{p}\lg{n}+
g\omega_n^{3/2}\lg{p}/\sqrt{n\lg{n}} +g /(\omega_n \lg^2{n}) )$. 
Proposition \ref{iransorting} is then derived.

\begin{proposition}
\label{iransorting}
For any $\omega_n$ such that $2p \omega_n^2 \lg{n}<n/2$, 
$p^2 \leq n / (\omega_n \lg{n})$ and $L \leq 2n /(p \lg^2{p})$, 
algorithm {\sc Sort\_IRan\_BSP} is  such that 
$\pi = 1+ \lg{p}/(\omega_n \lg{n}) + 2p\omega_n^2 \lg^2{p}/n
+O(1/\omega_n \lg{n}+\omega_n^{3/2} \lg{p}/ \sqrt{n \lg{n}} )$ 
and
$\mu = (1+1/\omega_n )g/\lg{n}+gp\omega_n^2 \lg^2{p}/n+ 
L p \lg^2{p}/(2n\lg{n})+O(1/\lg{p}\lg{n}+
g\omega_n^{3/2}\lg{p}/\sqrt{n\lg{n}} +g /(\omega_n \lg^2{n}) )$. 
\end{proposition}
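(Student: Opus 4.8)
The plan is to trace algorithm {\sc Sort\_IRan\_BSP} line by line through Figure~\ref{SortRI}, charge a computation and a communication cost to each step (invoking Lemma~\ref{Broadcast} for the broadcast, the Batcher-sort estimate already used in the analysis of {\sc Sort\_Det\_BSP} for the parallel sample sort, and Claim~\ref{Sampling1} for the bucket sizes), sum the charges to recover exactly the computation and communication expressions displayed just before the proposition, and then set $\pi=p\cdot(\text{computation})/(n\lg{n})$ and $\mu=p\cdot(\text{communication})/(n\lg{n})$, where $n\lg{n}$ is the cost charged to the comparing sequential sorter. As in the proof of Proposition~\ref{ransorting}, we assume the $n$ keys are evenly (but arbitrarily) distributed across the $p$ processors and that they may be treated as distinct, the transparent tagging of Section~\ref{duplicate} absorbing the handling of actual duplicates into lower-order terms; terms involving $L$ are, per the convention of Section~\ref{s2}, placed in the communication total.

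First I would account for the local phase: step~3 sorts $n/p$ keys per processor in $(n/p)\lg{(n/p)}$ time, and step~4 selects a uniformly random sample of $sp-1$ keys with $s=2\omega_n^2\lg{n}$ by the same mechanism as step~2 of {\sc Sort\_Ran\_BSP}, which (by the Chernoff argument of Proposition~\ref{ransorting}) leaves $O(s)$ sample keys at each processor at cost $O(s)$ computation and $O(L+gs)$ communication, and since these keys are drawn from the already sorted local array they are in sorted order. Step~5 then sorts the $sp$-key sample in parallel by Batcher's method applied to sorted runs of length $s$; by the estimate used for step~5 of {\sc Sort\_Det\_BSP} this costs $s(\lg^2{p}+\lg{p})$ computation, $(\lg^2{p}+\lg{p})(L+gs)/2$ communication, and about $\lg^2{p}/2+O(\lg{p})$ supersteps, contributing the $2\omega_n^2\lg{n}\lg^2{p}$ computation term, the $g\omega_n^2\lg{n}\lg^2{p}$ and $L\lg^2{p}/2$ communication terms, and remainders of order $\omega_n^2\lg{n}\lg{p}$ and $g\omega_n^2\lg{n}\lg{p}$. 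Steps~6--8 form the $p-1$ evenly spaced splitters and gather them at processor~$0$ in $O(p)$ computation and $O(L+gp)$ communication, and step~9 broadcasts the $p-1$ splitters by the pipelined tree of Lemma~\ref{Broadcast} with a suitable arity (e.g.\ $t=2$), for $O(L\lg{p}+gp)$. In step~11 each processor locates the $p-1$ splitters in its sorted local array by binary search ($p\lg{(n/p)}$ comparisons) and bucketizes; by Claim~\ref{Sampling1} with $k=p$ and $\varepsilon=1/\omega_n$ each of the $p$ buckets has total size at most $(1+1/\omega_n)n/p$ with probability $1-o(1)$, so step~13 routes $n/p$ keys out of and at most $(1+1/\omega_n)n/p$ keys into each processor in $O(L+g(1+1/\omega_n)n/p)$ communication, and step~14 merges at most $p$ received sorted runs in $(1+1/\omega_n)(n/p)\lg{p}$ comparisons (using $p\le n_{\mathit{max}}$, valid under the stated conditions).

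Summing gives computation time $n/p\,\lg{(n/p)}+(1+1/\omega_n)n\lg{p}/p+2\omega_n^2\lg{n}\lg^2{p}+O(p\lg{(n/p)}+\omega_n^2\lg{n}\lg{p})$ and communication time $(1+1/\omega_n)ng/p+g\omega_n^2\lg{n}\lg^2{p}+L\lg^2{p}/2+O(L\lg{p}+g\omega_n^2\lg{n}\lg{p}+pg)$, exactly the displayed expressions. The remaining step is arithmetic: divide by $n\lg{n}$ and multiply by $p$; use $\lg{(n/p)}/\lg{n}+\lg{p}/\lg{n}=1$ to collapse the first two computation terms into $1+\lg{p}/(\omega_n\lg{n})$; and use $p^2\leq n/(\omega_n\lg{n})$, $2p\omega_n^2\lg{n}<n/2$ and $L\leq 2n/(p\lg^2{p})$ to check each remaining term sits inside the claimed $O(\cdot)$ --- e.g.\ $p\cdot p\lg{(n/p)}/(n\lg{n})\leq p^2/n\leq 1/(\omega_n\lg{n})$, $p\cdot\omega_n^2\lg{n}\lg{p}/(n\lg{n})\leq\omega_n^{3/2}\lg{p}/\sqrt{n\lg{n}}$ after substituting the bound on $p$, $p\cdot L\lg{p}/(n\lg{n})\leq 2/(\lg{p}\lg{n})$, and $p\cdot pg/(n\lg{n})\leq g/(\omega_n\lg^2{n})$ --- which yields the stated $\pi$ and $\mu$.

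The only place that needs genuine care is the probabilistic bucket-size bound: one must verify that $s=2\omega_n^2\lg{n}$ actually satisfies the inequality in the hypothesis of Claim~\ref{Sampling1} with $k=p$ and $\varepsilon=1/\omega_n$ for some $\rho>0$ with $n^{-\rho}=o(1)$ (allowing, if necessary, a rescaling of $\omega_n$ by an absolute constant, which changes neither the side conditions nor the displayed bounds), and that this high-probability bound is the one propagated into the costs of steps~13 and~14. Everything else is bookkeeping of precisely the sort already carried out for {\sc Sort\_Det\_BSP} and {\sc Sort\_Ran\_BSP}.
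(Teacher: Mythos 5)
Your proposal is correct and takes essentially the same route as the paper: the paper obtains Proposition~\ref{iransorting} by exactly this phase-by-phase charging (local sort, Batcher-based parallel sample sort on sorted runs of length $s$, splitter gathering and broadcast, binary search of the splitters into the local keys, one routing round bounded via the randomized key-imbalance estimate, and $p$-way merging), stated in the paragraph immediately preceding the proposition, followed by the same normalization $\pi=pC_{\cal A}/(n\lg{n})$, $\mu=pM_{\cal A}/(n\lg{n})$ under the stated side conditions. Your closing remark that one must check the hypothesis of Claim~\ref{Sampling1} for $s=2\omega_n^2\lg{n}$, $k=p$, $\varepsilon=1/\omega_n$ is a point the paper leaves implicit, and is a fair refinement rather than a deviation.
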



\section{Performance Evaluation}
\label{s6}

As the BSP model is not just an abstract architectural model, it may also
be employed as a programming platform or, indeed, as a kind of a
programming paradigm.  The underlying concept
in the BSP model is the notion of the superstep and the abstraction that
non-local communication associated with a superstep takes place between
supersteps as a global operation.  Thus, a BSP program may be viewed as a
succession of supersteps, with the required non-local communication occurring
at the end of each superstep. Viewed  this way, the BSP model can be
realized as a library of functions with architecture independent semantics
for process creation, remote data access and bulk synchronization.
The Oxford BSP Toolset, \bsplib{}, implements such a paradigm \cite{Hill96}
and provides library support for BSP programming. 
The library functions are callable from standard imperative 
languages such as C and  Fortran. 

It should be borne in mind that 
such support can be made available by other non BSP-specific libraries
(e.g. MPI) that support the simple communication and synchronization 
primitives required for programming under the BSP model.
The effort required to learn the basics of \bsplib{} is minimal.
One needs to understand the semantics of no more than 10-15 functions;
half of these functions are related to process creation, 
destruction and identification. 
The implementations presented in this work were 
programmed in ANSI C. Only recompilation is required to run the same 
code on other platforms such as Silicon Graphics Origin 2000 and Power
Challenge systems, or an IBM SP2 system.
Our implementations are test for scalability 
and portability on a  distributed memory system, a 128-processor 
Cray T3D. 
The manufacturer-supplied C compiler ({\tt cc}) 
is used through the \bsplib{} front-end, and the source-code is 
compiled with the {\tt -O3} compiler option set.
Timing is obtained through the use of the real-time (wall-clock time) 
clock function {\tt bsp\_time} of \bsplib{} \cite{BSP}. 
The timing results obtained are discussed later in this section.
The depicted results in the following tables are in general,
averages over at least four experiments.
In the following discussion we shall require the BSP
parameters of the machine configurations test. 
The CRAY T3D is thus reported to behave as a BSP machine with
sets of parameters 
$(16, 130\mu sec, 0.21\mu sec/int)$, 
$(32, 175\mu sec, 0.26\mu sec/int)$, 
$(64, 364\mu sec, 0.28\mu sec/int)$, 
$(128,762\mu sec, 0.34\mu sec/int)$,
for the configuration used in the experiments
(data type in communication is a 64-bit integer).
Our implementation of quicksort, sorts $1024\times 1024$ integer keys in 
about 3 seconds. This is equivalent to 7 comparisons per microsecond;
expressing $g$ in terms of basic computational operations (i.e. comparisons)
a value of  $g$ of $0.21 \mu sec/int$ becomes $0.21\times 7\approx 1.47$ 
comparisons/int.
The BSP approach in modeling the performance and running time
characteristics of parallel algorithms and programs is by modeling
some abstract features of the underlying parallel platform as expressed
through the parameters $p$, $L$ and $g$. This parametrization is not
as pure in terms of measuring hardware performance as say that of the LogP
model. In fact it may be considered as more flexible, accommodating
and forgiving. Parallel performance prediction has also attempted to
model high-level parallel and sequential operations  \cite{Dusseau}. 
The problem with such approaches as also reported in \cite{Dusseau} is
that computational operation modeling can be affected by other
platform characteristics (e.g. caching) 
thus making the study  of scalability issues difficult.
The minimalist approach of the BSP model may
lead to fewer such problems, in general.

As the objectives of algorithm design on the BSP model are 
{\it scalability}, {\it portability} and {\it predictability of 
performance}, the algorithms we suggest may not lead to the best 
possible implementation on a specific platform. 
The sequential methods used may not be the best possible.
One may also improve performance (sequential and parallel) by directly 
using specific features and interfaces of each machine 
(e.g. communication primitives). 
Our aim is to show that all three objectives can be realized without 
significant loss of performance in a general, architecture independent way. 
This is similar to the portability/reusability one obtains by programming 
in a high-level language as opposed to assembly.

\subsection{Algorithm Implementation Features}

In the implementation of the BSP algorithms, the following choices have
been made.


\begin{ilist}
\item The implemented parallel algorithms are {\em comparison-based} sorting
algorithms even though the data type of the input keys 
is ANSI C {\tt int} integers, and one of the two methods
used for sequential sorting is  an integer-specific method, radixsort.
We use integer data keys for input as most other experimental studies 
also sort integer data and we would like to be able to compare our
implementations to other ones. In addition, comparisons performed on integers 
are faster than those performed on strings of arbitrary length or 
non-trivial structures. This way we make sequential operations as fast 
as possible for the purpose of showing the parallel (communication) 
efficiency of our designs and implementations.

\item Our algorithms {\em naturally handle duplicate keys} 
efficiently. The approach to handle duplicate keys is the one
discussed in Section \ref{duplicate}. We used it for both the
deterministic and randomized algorithm implementations.
As a result, the introduced algorithms are independent of input 
distribution.  As it was discussed in Section \ref{duplicate},
the memory overhead of handling duplicate keys is negligible, and
may triple in the worst case the sample size as it attaches
to each sample key an integer processor identifier and an integer
array index; as sample size is $o(1)$ of problem size such overhead
is tolerable.
The overhead of duplicate handling in computation
and communication time is in general asymptotically negligible
thus affecting only lower order terms of the running time
(a 3-6\% deterioration in performance was observed in most of
the experiments compared to test cases where duplicate key
handling was intentionally switched off). 
The effect of duplicate key handling becomes non-negligible when 
sorting 1M integers (relatively few keys) on 128 (many) processors. 
Our approach is  in contrast to the methods of 
\cite{jaja1,jaja2} that require twice as much communication to 
accommodate duplicate keys.

\item 
Given the scarcity of experimental study results that can
be used for comparison purposes and the fact that experimental sorting 
studies (e.g. \cite{jaja1,jaja2}) with integer input data use radixsort
for sequential sorting, our implementations and experimental studies 
use two different algorithms for sequential sorting.
One of the sequential sorting algorithms is a purely comparison-based 
algorithm, an author-written version of {\em quicksort}, and another one 
is an author-written integer specific version of {\em radixsort}. 
Radixsort was implemented purely for the purpose of comparing our 
implementations with other comparison-based parallel algorithms 
(such as the ones in \cite{jaja1,jaja2,jaja3}) that nevertheless 
use radixsort for sequential sorting of integers.

\item  Previous experimental results \cite{GS97a} describe generic 
implementations of our algorithms on the test and other platforms. 
A generic implementation is one that can generically sort any data type. 
This is achieved by including in the parameters of the parallel sorting 
function an additional argument, which is a function called {\tt compare}, 
that compares two instances of the data type that is used as input for 
sorting. Standard C library function {\tt qsort} is such an example of a
generic sorting function.  A function  like {\tt compare} returns -1, 
0, or +1 depending on whether the former of its arguments is less, equal or 
greater than the latter one. The performance of a generic sorting algorithm
that performs $O(n \lg{n})$ comparisons to sort $n$ keys is then bound 
by the $O(n \lg{n})$  number of function calls to {\tt compare}. As a
consequence, a generic sorting function becomes 4-7 times slower than 
the corresponding
non-generic one as it is also evident in the comparison of \cite{jaja1}
(Table X) that compares a non-generic to a generic implementation.
As all other experimental parallel sorting studies use
non-generic implementations, we decided to follow their approach
in this study.  For a non-generic implementation however, one needs to 
create a new set of functions for each data type used. We 
therefore decided to implement non-generic functions of the otherwise 
originally generic code that operate on integers.

\item Total sample size over all the processors for the deterministic
algorithm is $p^2 \lceil \omega_n \rceil$, where $\omega_n = \lgg{n}$.
For the randomized algorithm it is $2p \omega_n^2 \lg{n}$, where
$\omega_n^2 = \lg{n}$.
\end{ilist}

\subsection{Algorithm implementations}

The following BSP sorting algorithms have been implemented on top of 
\bsplib{}. 

\begin{elist}
\item Two variants  of the one-optimal deterministic algorithm 
      {\sc Sort\_Det\_BSP}  that both operate on ANSI C {\tt int} data 
      types, with duplicate key handling performed according to the method
      described in Section \ref{duplicate}. One uses for sequential sorting, 
      quicksort (an author written 
      implementation) and is called {\sf [DSQ]}, and the other uses 
      radixsort and called {\sf [DSR]}. 
\item Two variants of the one-optimal randomized  algorithm 
      {\sc Sort\_IRan\_BSP}, with duplicate key handling performed according
      to the method described in Section \ref{duplicate},
      called, similarly to the previous case, {\sf [RSQ]} and {\sf [RSR]}.
\item A version of Batcher's bitonic sort \cite{Batcher68} has been 
      implemented and is called {\sf [BSI]}. {\sf [BSI]} is used for parallel 
      sample sorting only. As its performance is worse than that of any of 
      the other four implementations in all but very small problem and
      processor sizes 
      (for such cases, Batcher's algorithm is faster because of its low 
       overhead) we do not compare its performance to our other implementations.
\end{elist}

\noindent 
{\bf Remark 1.}  The implementations can handle duplicate
keys as this was previously explained.

\medskip
\noindent
{\bf Remark 2.}  The implementations  are 
{\em portable} and {\em reusable} enough to allow any 
sequential sorting or merging algorithm to be used as the 
underlying sequential method. 

\subsection{Sorting Benchmarks}

We have test our implementations on a variety of input sets. We tried to 
conform to previously published data sets \cite{jaja1,jaja2,jaja3} 
by including them in this experimental study.  
For a discussion of the practicality and suitability of this set of 
sorting benchmarks we  refer to \cite{jaja3,jaja2,jaja1}.
The sorting benchmarks are briefly defined and
{\tt INT\_MAX} is the maximum integer value plus one
accommodated in a 32-bit signed arithmetic data type (e.g., $2^{31}$).  
The definitions below follow those appearing in \cite{jaja3}, except for
the worst regular input set that follows the definition in 
\cite{jaja1,jaja2}. The
format of the tables reporting the results is similar to the ones
of \cite{jaja1,jaja2,jaja3}. In addition,
we decided not to test our implementation on two additional
data sets of \cite{jaja1,jaja2}, i.e.  {\sf [Z]} and {\sf [RD]} for
one good reason. Due to the way our duplicate-key handling method
works, some preliminary results for these two distributions were in no
case  worse than that of {\sf [U]} (in fact, they were better) 
and similar to those of {\sf [DD]} or {\sf [WR]}. These observations
agree with the results of \cite{jaja1} where {\sf [Z]} and {\sf [RD]}
give  results similar to {\sf [DD]} and no worse than those of {\sf [U]},
and of \cite{jaja2} where the results of  {\sf [Z]} and {\sf [RD]}
are similar to {\sf [DD]} and no worse than those of {\sf [U]}.

\begin{elist}
\item Uniform {\sf [U]}, the input is uniformly and at random distributed,
      and is  generated by calling a pseudo random number generator, the C 
      standard library function {\tt random()}, which returns a long 
      (integer) in the range $[0, \ldots, 2^{31}-1]$ and processor's $i$
      seed is $21+1001 \cdot i$.  
\item Gaussian {\sf [G]}, the input follows a Gaussian distribution, 
      and is approximated by adding the results of four calls to 
      {\tt random()} and dividing the sum by four.
\item Bucket Sorted {\sf [B]}, the input (per processor) is split into $p$ 
      buckets, each of size $n/p^2$, so that the $i$-th such bucket, 
      $0 \leq i \leq p-1$, consists of numbers uniformly and at random 
      distributed in the range $[i\mbox{\tt INT\_MAX}/p, \ldots, 
      (i+1)\mbox{\tt INT\_MAX}/p-1]$.
\item $g$-Group {\sf [g-G]}, the processors are first divided into $p/g$ 
      groups each consisting of $g$ processors, and within group $j$ each
      processors splits its input into $g$ buckets so that the $i$-th such
      bucket consists of numbers uniformly and at random distributed in the 
      range  $[((jg+p/2+i) \bmod p) \mbox{\tt INT\_MAX}/p, \ldots, 
      ((jg+p/2+i+1) \bmod p) \mbox{\tt INT\_MAX}/p-1]$.
\item Staggered {\sf [S]}, the input of processor $i < p/2$ is uniformly 
      and at random distributed in the range $[(2i+1)\mbox{\tt INT\_MAX}/p, 
      \ldots, (2i+2)\mbox{\tt INT\_MAX}/p-1]$, and of processor $i \geq p/2$ 
      in the range $[(i-p/2)\mbox{\tt INT\_MAX}/p, \ldots, (i-p/2+1)
      \mbox{\tt INT\_MAX}/p-1]$.
\item Deterministic Duplicates {\sf [DD]}, the $n/2^{i}$ input keys
      of the first $p/2^i$ consecutive processors are all set to 
      $\lg{(n/p^{i-1})}$, and so forth. In processor $p-1$, $n/(p 2^i)$ keys, 
      starting from the lower indexed and proceeding to higher indexed, are 
      set to $\lg{n/(p 2^{i-1})}$ and so forth.
\item Worst Regular {\sf [WR]}, as described in \cite{jaja1}.
\end{elist}

\subsection{Experimental Results}

In this section, we report on the performance and relative merits of the 
implementations of the algorithms of Section~\ref{UAlg}.
The tables below summarize some of the experimental results we had obtained. 
Timing figures are in general truncated to three or fewer decimal digits.

Table \ref{tbl1} shows timing results for algorithm
{\sc Sort\_IRan\_BSP} on 64 processors of a Cray T3D. 
Results for both {\sf [RSR]} and {\sf [RSQ]} are presented
for all seven benchmark input sets. Size refers to the total size 
$n$ of the input over all 64 processors.

It is possible to compare our implementation of {\sc Sort\_IRan\_BSP}
to the one described and implemented in \cite{jaja3} (Table I, page 215). 
For problem size 1M, ours is slower by about 3-10\%. For the other common
problem sizes (4M, 16M, 64M) our implementation is faster by about 3-8\%.
It is worth noting that for small problem sizes (e.g. 1M)
the use of quicksort for sequential sorting improves the parallel
performance of our implementation by 1-2\%.

Compared to the implementation of the randomized sorting algorithm introduced
in \cite{jaja2}, our implementation is slower by  3-15\% for problem
size 1M, 3-5\% for problem size 4M, and about 1-2\% for the remaining problem
sizes. The slowness of our implementation is mainly attributable to
sequential merging which takes 33-39\% of the total execution time 
of any one experiment
as opposed to only 25\% for \cite{jaja2}. Sequential operations (sorting 
and merging related) take together 90\% of execution time in our 
implementation and 80\% in the implementation of \cite{jaja2}. Although our 
implementation is more communication efficient, mainly because it uses only 
one round of data communication, this advantage is wasted by the apparent 
inefficiency of a sequential operation (merging).

\begin{table}[htb]\scriptsize \centering
\begin{tabular}{|r|r|r|r|r|r|r|r||r|r|r|r|r|r|r|} \hline 
  \multicolumn{15}{|c|}{Running Time of {\sc Sort\_IRan\_BSP} on 64 procs} \\ \hline
  \multicolumn{1}{|c|}{}      &
  \multicolumn{7}{|c||}{[RSR]} &
  \multicolumn{7}{|c|}{[RSQ]} \\ \cline{2-15}
Size& {\sf [U]}&{\sf [G]}&{\sf [2-G]}&{\sf [B]}&{\sf [S]}&{\sf [DD]}&{\sf [WR]}& 
      {\sf [U]}&{\sf [G]}&{\sf [B]}&{\sf [2-G]}&{\sf [S]}&{\sf [DD]}&{\sf [WR]} \\ \hline
1M       &0.079 &0.077 &0.073 &0.079 &0.062 &0.053 &0.079 &
	  0.076 &0.078 &0.072 &0.068 &0.062 &0.050 &0.067 \\ \hline
4M       &0.269 &0.269 &0.254 &0.270 &0.211 &0.171 &0.271 &
	  0.283 &0.280 &0.258 &0.239 &0.234 &0.176 &0.241 \\ \hline
8M       &0.526 &0.526 &0.494 &0.527 &0.408 &0.329 &0.528 &
	  0.559 &0.563 &0.522 &0.484 &0.462 &0.356 &0.486 \\ \hline
16M	 &1.03  &1.03  &0.987 &1.04  &0.802 &0.643 &1.04  &
          1.15 	&1.16  &1.10  &1.00  &0.960 &0.735 &1.00 \\ \hline
32M 	 &2.06	&2.07  &1.92  &2.06  &1.60  &1.27  &2.06  &
          2.39	&2.37  &2.24  &2.05  &2.00  &1.54  &2.06  \\ \hline
64M 	 &4.09	&4.09  &3.90  &4.09  &3.16  &2.50  &4.11  &
	  4.88	&4.88  &4.65  &4.29  &4.16  &3.17  &4.31 \\ \hline
\end{tabular}
\caption{{\small Execution time of {\sc Sort\_IRan\_BSP} with $p=64$ 
($1M=1024\times 1024$)}.}
\label{tbl1}
\end{table}

\begin{table}[htb] \scriptsize \centering
\begin{tabular}{|r|r|r|r|r|r|r|r||r|r|r|r|r|r|r|} \hline 
  \multicolumn{15}{|c|}{Running Time of {\sc Sort\_Det\_BSP} on 64 procs} \\ \hline
  \multicolumn{1}{|c|}{}      &
  \multicolumn{7}{|c||}{[DSR]} &
  \multicolumn{7}{|c|}{[DSQ]} \\ \cline{2-15}
Size& {\sf [U]}&{\sf [G]}&{\sf [2-G]}&{\sf [B]}&{\sf [S]}&{\sf [DD]}&{\sf [WR]}& 
      {\sf [U]}&{\sf [G]}&{\sf [B]}&{\sf [2-G]}&{\sf [S]}&{\sf [DD]}&{\sf [WR]} \\ \hline
1M        &0.091	&0.091	&0.084	&0.089	&0.073	&0.063	&0.089 &
	   0.088	&0.088	&0.083	&0.079	&0.072	&0.060	&0.079 \\ \hline
4M        &0.281	&0.280	&0.265	&0.279	&0.218	&0.177	&0.280 &
	   0.294	&0.291	&0.269	&0.249	&0.242	&0.183	&0.250 \\ \hline
8M        &0.532	&0.530	&0.507	&0.529	&0.415	&0.330	&0.529 &
	   0.566	&0.566	&0.535	&0.487	&0.470	&0.360	&0.489 \\ \hline
16M       &1.03		&1.03	&0.985	&1.03	&0.803	&0.633	&1.03 &
	   1.15		&1.16	&1.09	&0.985	&0.963	&0.727	&0.990\\ \hline
32M       &2.06		&2.06	&1.92	&2.07	&1.58	&1.24	&2.07 &
	   2.38		&2.37	&2.26	&2.06	&1.98	&1.50	&2.07 \\ \hline
64M       &4.09		&4.09	&3.88	&4.08	&3.12	&2.44	&4.08 &
	   4.94		&4.86	&4.63	&4.27	&4.18	&3.14	&4.27 \\ \hline
\end{tabular}
\caption{{\small Execution time of {\sc Sort\_Det\_BSP} with $p=64$ 
($1M=1024\times 1024$)}.}
\label{tbl2}
\end{table}

The experimental results obtained for all seven benchmark sets
for algorithm {\sc Sort\_Det\_BSP} are depicted in
Table \ref{tbl2}. A comparison of our implementation to the one
in \cite{jaja3} (Table IV, page 218) indicates that ours 
is faster by as much as 13-35\% 
for problem size 1M, 15-25\% for problem size 4M, 12-15\% for 16M, and 
7-15\% for 64M.
This is possibly attributable to the fact that the algorithm in \cite{jaja3}
is a direct variant of the one in \cite{SH}. Our implementation on the
other hand, although it is also based on ideas of \cite{SH}, 
it naturally handles duplicate keys (without any significant 
increase of communication time), implements deterministic oversampling and 
determines sample 
size on the basis of maximum key imbalance at the completion of sorting.
Its design and analysis is solely based on the BSP model
and seems to perform better in practice exhibiting performance that is
within the predictions of the theoretical analysis.

The algorithm of \cite{jaja3} was further refined into an oversampling-based 
algorithm  that takes into account duplicate keys  and
is described in \cite{jaja1}. This latter algorithm also handles duplicate keys
by performing twice as much communication.
Comparing the performance of that latter algorithm (Table III of \cite{jaja1})
to our results as obtained in Table \ref{tbl2} we observe that
for problem size 1M our implementation is faster
by as much as 10-20\%, and for problem sizes, 4M, 16M and 64M our algorithm
is faster by 8-26\%, 3-10\% and 2-3\%. In our implementation sorting
requires one communication round instead of two in \cite{jaja1}. 
Even though our sequential methods are slower (as exhibited in
Table \ref{tbl6r}) than the ones implemented in \cite{jaja1} 
(Table IX), the extra communication round
of the algorithm in \cite{jaja1} neutralizes   such an advantage 
even though the cost of a communication round is no more 
than 5-6\% of the overall time of the algorithm as opposed to sequential 
sorting and merging that 
account for 45-60\% and 30-40\% of the running time respectively.

\begin{table}[htb] \scriptsize \centering
\begin{tabular}{|r|r|r|r|r|r||r|r|r|r|r|} \hline 
  \multicolumn{1}{|c|}{}      &
  \multicolumn{5}{|c||}{[RSR]} &
  \multicolumn{5}{|c|}{[RSQ]} \\ \cline{2-11}
Input Set&{$p=8$  }&{$p=16$ }&{$p=32$ }&{$p=64$ }&{$p=128$}& 
          {$p=8$  }&{$p=16$ }&{$p=32$ }&{$p=64$ }&{$p=128$} \\ \hline
{\sf [U]} &3.16	&1.74	&0.956	&0.526	&0.300 (65\% )&
           3.90	&2.00	&1.07	&0.559	&0.310 (78\% )\\ \hline
{\sf [WR]}&3.16	&1.74	&0.956	&0.527	&0.306 (64\% )
          &3.64 &1.82	&0.938	&0.486	&0.272 (83\% )\\ \hline
  \multicolumn{1}{|c|}{}      &
  \multicolumn{5}{|c||}{[DSR]} &
  \multicolumn{5}{|c|}{[DSQ]} \\ \cline{2-11}
{\sf [U]} & 3.18 &1.72	&0.947	&0.532	&0.374 (53\% )
          & 3.92 &1.98	&1.07 	&0.566	&0.386 (63\% )\\ \hline
{\sf [WR]}& 3.18 &1.73	&0.945	&0.530	&0.372 (53\% )
          & 3.65 &1.82	&0.930	&0.489	&0.337 (67\% )\\ \hline
\end{tabular}
\caption{{\small Execution time of {\sc Sort\_IRan\_BSP} and 
{\sc Sort\_Det\_BSP} on an input of size 
$8\times 1024\times 1024$}.For $p=128$ efficiencies are also shown.}
\label{tbl3}
\end{table}

Table \ref{tbl3} shows the scalability of our algorithm implementations
on inputs {\sf [U]} and {\sf [WR]}. We used these two input distributions
as they have also been used in the experimental work of 
\cite{jaja1,jaja2,jaja3}; in those studies, for a deterministic sorting 
algorithm, {\sf [WR]} is used exclusively \cite{jaja1,jaja3}, 
whereas for a randomized 
sorting algorithm both {\sf [WR]} \cite{jaja3} and {\sf [U]} \cite{jaja2} 
are used.
A scalability comparison of all these algorithms
is depicted in Table \ref{tbl9} later in this work.
With respect to Table \ref{tbl3} it is worth noting that for relatively
small ratios $n/p$ (e.g. when $p=128$ and for
moderate problem sizes), sequential sorting of integers using quicksort
may yield better performance than using radixsort, as expected.
As the variants of our algorithms that use quicksort for sequential sorting 
are more CPU intensive, efficiencies are higher for such cases than those 
that use radixsort. For $p=128$, {\sf [RSR]} has the same efficiency 64-65\%
independent of the input data distribution; for {\sf [RSQ]} this varies 
between 78\% to 83\%. The efficiencies of the deterministic algorithm are 
lower however; they are 53\% and vary between 63\% - 67\%, i.e. they are
between 12-15\% lower than those of the randomized  algorithm. 

This is due to the fact that for a fixed size sample, random sampling 
yields more balanced set of keys in the routing and multi-way merging phases 
of the sorting algorithm. 
We note that for the deterministic algorithm we chose $\omega_n = \lgg{n}$
and for the randomized algorithm $\omega_n^2 = \lg{n}$. 
In all runs of our experiments (of either
the deterministic or the randomized algorithm) maximum set
imbalance was kept below 15\% well within the approximately 20\% obtained
from the theoretical analysis for the deterministic 
($1/\ceil{\lgg{n}} \times 100\%$) and randomized 
algorithms ($1/\sqrt{\lg{n}} \times 100 \%$). 

For problem size $n=2^{23}=8M$ as used in Table \ref{tbl3} and for $p=128$, 
the conditions on $n$, $p$ and $L$ in Proposition \ref{detsorting} and 
\ref{iransorting} are hardly satisfied.
Substituting for $n$, $p$, $L$ and $g$ in the expressions for
$\pi$ and $\mu$ in Proposition \ref{detsorting}  and ignoring the low
order terms (enclosed in $O(.)$) 
we derive a theoretical bound on efficiency of at least 66\% for {\sf [DSQ]}.
The observed efficiency was 63\% and 67\% for {\sf [U]} and {\sf [WR]} 
respectively. We note that the contribution to the theoretical estimation of
running time of handling duplicate keys was ignored (such contributions
would have been absorbed in the $O(.)$ term which is anyway ignored in
the calculation of the theoretical efficiency). Had we disabled the
code for handling duplicate keys, for $p=128$ running time for
{\sf [DSQ]} on input {\sf [U]} would be 0.348 (i.e. efficiency 70\% instead
of 63\%) and that for {\sf [DSR]} on input {\sf [U]} would be 0.336 
(i.e. efficiency  58\% instead of 53\%).
For the randomized algorithm the theoretical prediction of
at least 66\% was also satisfied in practice 
(observed efficiency was 78-82\% ).
We note that the theoretical analysis applies only to the case that sequential 
sorting is performed by a comparison and exchange algorithm only. 
Although the execution time of radixsort is independent of the
input distribution (if we ignore caching effects), that of quicksort
is not. 

\begin{table}[htb] \scriptsize \centering
\begin{tabular}{|r|r|r|r||r|r|r||r|r|r||r|r|r|} \hline 
  \multicolumn{1}{|c|}{}      &
  \multicolumn{6}{|c||}{Time per phase} &
  \multicolumn{6}{|c|}{Percentage (\%) of total time for each phase} \\ \cline{2-13}
  \multicolumn{1}{|c|}{}      &
  \multicolumn{3}{|c||}{8M} &
  \multicolumn{3}{|c||}{32M} &
  \multicolumn{3}{|c||}{8M} &
  \multicolumn{3}{|c|}{32M} \\ \cline{2-13}
Procs&32       &64      &128     &32      &       64&   128 
     &32       &64      &128     &32      &       64&   128  \\ \hline
Ph 1 &  0.000  & 0.001  & 0.002  & 0.000  & 0.001  & 0.002  & 0.07  & 0.23  & 0.73  & 0.02  & 0.06  & 0.20  \\ \hline
Ph 2 &  0.560  & 0.277  & 0.137  & 2.220 & 1.115   & 0.557  & 57.76 & 52.64 & 45.48 & 58.62 & 54.09 & 49.49 \\ \hline
Ph 3 &  0.011  & 0.011  & 0.017  & 0.011  & 0.013   & 0.020  & 1.13  & 2.11  & 5.61  & 0.29  & 0.63  & 1.77  \\ \hline
Ph 4 &  0.004  & 0.003  & 0.006  & 0.004  & 0.003   & 0.006  & 0.41  & 0.57  & 1.98  & 0.11  & 0.15  & 0.53  \\ \hline
Ph 5 &  0.066  & 0.036  & 0.021  & 0.259  & 0.144   & 0.080  & 6.80  & 6.82  & 6.93  & 6.83  & 6.98  & 7.10  \\ \hline
Ph 6 &  0.324  & 0.198  & 0.118  & 1.288  & 0.786   & 0.461  & 33.40 & 37.57 & 38.94 & 33.98 & 38.10 & 40.91 \\ \hline
Ph 7 &  0.004  & 0.000  & 0.001  & 0.006  & 0.000   & 0.000  & 0.41  & 0.06  & 0.33  & 0.16  & 0.00  & 0.00  \\ \hline
Total&  0.970  & 0.527  & 0.303  & 3.791  & 2.063   & 1.127  & 100   & 100   & 100   & 100   & 100   & 100   \\ \hline
\end{tabular}
\caption{{\small Scalability of various phases of {\sf [RSR]} on input
{\sf [U]}. Ph1=Init, 
Ph2=SeqSort, Ph3=Sampling, Ph4=Prefix, Ph5=Routing, Ph6=Merging, 
Ph7=Termination.}}
\label{tbl6r}
\end{table}

\begin{table}[htb]\scriptsize \centering
\begin{tabular}{|r|r|r|r||r|r|r||r|r|r||r|r|r|} \hline 
  \multicolumn{1}{|c|}{}      &
  \multicolumn{6}{|c||}{Time per phase} &
  \multicolumn{6}{|c|}{Percentage (\%) of total time for each phase} \\ \cline{2-13}
  \multicolumn{1}{|c|}{}      &
  \multicolumn{3}{|c||}{8M} &
  \multicolumn{3}{|c||}{32M} &
  \multicolumn{3}{|c||}{8M} &
  \multicolumn{3}{|c|}{32M} \\ \cline{2-13}
Procs&32       &64      &128     &32      &       64&   128 
     &32       &64      &128     &32      &       64&   128  \\ \hline
Ph 1 &  0.000  & 0.001  & 0.002  & 0.000  & 0.001   & 0.002  & 0.06  & 0.25  & 0.76  & 0.02  & 0.05  & 0.18  \\ \hline
Ph 2 &  0.675  & 0.311  & 0.150  & 3.007  & 1.432   & 0.675  & 61.56 & 55.48 & 47.81 & 65.70 & 60.00 & 54.37 \\ \hline
Ph 3 &  0.010  & 0.010  & 0.017  & 0.011  & 0.013   & 0.019  & 0.91  & 1.78  & 5.40  & 0.24  & 0.54  & 1.53  \\ \hline
Ph 4 &  0.014  & 0.003  & 0.005  & 0.005  & 0.003   & 0.006  & 1.28  & 0.53  & 1.59  & 0.11  & 0.13  & 0.48  \\ \hline
Ph 5 &  0.071  & 0.038  & 0.021  & 0.262  & 0.142   & 0.078  & 6.47  & 6.77  & 6.67  & 5.72  & 5.95  & 6.28  \\ \hline
Ph 6 &  0.323  & 0.197  & 0.119  & 1.288  & 0.796   & 0.462  & 29.44 & 35.13 & 37.78 & 28.14 & 33.33 & 37.17 \\ \hline
Ph 7 &  0.003  & 0.000  & 0.000  & 0.003  & 0.000   & 0.000  & 0.27  & 0.05  & 0.00  & 0.07  & 0.00  & 0.00  \\ \hline
Total&  1.097  & 0.561  & 0.315  & 4.577  & 2.388   & 1.243  & 100   & 100   & 100   & 100   & 100   & 100   \\ \hline
\end{tabular}
\caption{{\small Scalability of various phases of {\sf [RSQ]} on input
{\sf [U]}. Ph1=Init, 
Ph2=SeqSort, Ph3=Sampling, Ph4=Prefix, Ph5=Routing, Ph6=Merging, 
Ph7=Termination.}}
\label{tbl6q}
\end{table}

Tables \ref{tbl6r} and \ref{tbl6q} show the distribution of execution time for
sorting 8M and 32M integers by the randomized algorithm when input is 
{\sf [U]}. Percentages of total running time for the execution time of each
phase are also shown. With reference to Figure \ref{SortRI} and Tables
\ref{tbl6r}/\ref{tbl6q}, Ph1 corresponds
to the part of the code before line 2, Ph2 (sequential sorting) to the 
execution of lines 2-3,
Ph3 to lines 4-9, Ph4 to lines 9-12, Ph5 (key routing) to lines 12-13, 
and Ph6 (sequential merging) to
lines 14-15 of the code for {\sc Sort\_IRan\_BSP}.
From both tables, the sequential portion of the code contributes
at least 85-90\% of the running time. 
Efficient sequential implementations can thus significantly affect the 
execution time of our implementations,
as it was remarked when a comparison of our implementation to other ones
was made.


It is worth observing that routing time of Ph5 scales satisfactorily
with the number of processors, an indication that the randomized algorithm 
is quite scalable. In the implementation, 
$\omega_n = \sqrt{\lg{n}}$, and therefore in Ph6 each processor is expected 
to send or receive no more than $(1+ 1/ \omega_n ) n /p$ keys. 
The observed imbalance on any of the experiments reported in Tables 
\ref{tbl6r} and \ref{tbl6q} was kept below 15\% which is within the 20\% 
imbalance implied by the theoretical calculations 
(e.g. $1/\sqrt{\lg{2^{23}}} \approx  0.208$).
The communication time of Ph6 gives an observed value for $g$ of
$0.23-0.25$, $0.24-0.28$ and $0.28-0.32$ for $p=32$, $p=64$ and $p=128$
respectively (assuming that any processor sent or received
at most $1.15n/p$ keys) and these values are consistent with the
measured ones (in other experiments \cite{GP98}) of $0.26$, $0.28$ and $0.34$
$\mu sec/int$ respectively reported in the beginning of this section.

\begin{table}[htb]\scriptsize \centering
\begin{tabular}{|r|r|r|r||r|r|r||r|r|r||r|r|r|} \hline 
  \multicolumn{1}{|c|}{}      &
  \multicolumn{6}{|c||}{Time per phase} &
  \multicolumn{6}{|c|}{Percent(\%) of total time per phase} \\ \cline{2-13}
  \multicolumn{1}{|c|}{}      &
  \multicolumn{3}{|c||}{8M} &
  \multicolumn{3}{|c||}{32M} &
  \multicolumn{3}{|c||}{8M} &
  \multicolumn{3}{|c|}{32M} \\ \cline{2-13}
Procs&32       &64      &128     &32      &       64&   128 
     &32       &64      &128     &32      &       64&   128  \\ \hline
Ph 1    &0.000  &0.001   &0.002  &0.000  &0.001   &0.002 &0.07   &0.23  &0.59   &0.02  &0.06  &0.18\\ \hline
Ph 2    &0.560  &0.277   &0.139  &2.222  &1.115   &0.558 &57.94  &53.42 &41.12  &58.64 &54.30 &49.73\\ \hline
Ph 3    &0.005  &0.008   &0.018  &0.006  &0.008   &0.018 &0.52   &1.54  &5.33   &0.16  &0.39  &1.60\\ \hline
Ph 4    &0.004  &0.003   &0.005  &0.005  &0.003   &0.005 &0.41   &0.58  &1.48   &0.13  &0.15  &0.45\\ \hline
Ph 5    &0.073  &0.038   &0.022  &0.260  &0.145   &0.081 &7.55   &7.31  &6.51   &6.86  &7.06  &7.22\\ \hline
Ph 6    &0.315  &0.192   &0.152  &1.294  &0.781   &0.458 &32.57  &36.92 &44.97  &34.14 &38.00 &40.82\\ \hline
Ph 7    &0.009  &0.000   &0.000  &0.002  &0.001   &0.000 &0.93   &0.00  &0.00   &0.05  &0.05  &0.00\\ \hline
Total   &0.967  &0.520   &0.338  &3.79   &2.055   &1.122 &100    &100   &100    &100   &100   &100\\ \hline
\end{tabular}
\caption{{\small Scalability of various phases of {\sf [DSR]} on input
{\sf [U]}. Ph1=Init, 
Ph2=SeqSort, Ph3=Sampling, Ph4=Prefix, Ph5=Routing, Ph6=Merging, 
Ph7=Termination}.}
\label{tbl7r}
\end{table}

\begin{table}[htb]\scriptsize \centering
\begin{tabular}{|r|r|r|r||r|r|r||r|r|r||r|r|r|} \hline 
  \multicolumn{1}{|c|}{}      &
  \multicolumn{6}{|c||}{Time per phase} &
  \multicolumn{6}{|c|}{Percent(\%) of total time per phase} \\ \cline{2-13}
  \multicolumn{1}{|c|}{}      &
  \multicolumn{3}{|c||}{8M} &
  \multicolumn{3}{|c||}{32M} &
  \multicolumn{3}{|c||}{8M} &
  \multicolumn{3}{|c|}{32M} \\ \cline{2-13}
Procs&32       &64      &128     &32      &       64&   128 
     &32       &64      &128     &32      &       64&   128  \\ \hline
Ph 1    &0.000  &0.001   &0.002  &0.000  &0.001   &0.002 & 0.06   &0.22  &0.57   &0.02  &0.05  &0.16\\ \hline
Ph 2    &0.675  &0.310   &0.151  &3.005  &1.433   &0.676 &62.76   &56.2  &43.14  &65.46 &60.32 &54.6\\ \hline
Ph 3    &0.006  &0.008   &0.018  &0.006  &0.008   &0.017 & 0.56   &1.45  &5.14   &0.13  &0.34  &1.37\\ \hline
Ph 4    &0.004  &0.003   &0.005  &0.005  &0.003   &0.005 & 0.37   &0.54  &1.43   &0.11  &0.13  &0.4\\ \hline
Ph 5    &0.071  &0.037   &0.022  &0.268  &0.150   &0.077 & 6.60   &6.69  &6.29   &5.84  &6.31  &6.22\\ \hline
Ph 6    &0.316  &0.192   &0.151  &1.303  &0.780   &0.460 &29.37   &34.72 &43.14  &28.38 &32.81 &37.16\\ \hline
Ph 7    &0.003  &0.001   &0.001  &0.003  &0.001   &0.001 & 0.28   &0.18  &0.29   &0.07  &0.04  &0.08\\ \hline
Total   &1.076  &0.553   &0.350  &4.591  &2.377   &1.238 &100     &100   &100    &100   &100   &100\\ \hline
\end{tabular}
\caption{{\small Scalability of various phases of {\sf [DSQ]} on input
{\sf [U]}. Ph1=Init, 
Ph2=SeqSort, Ph3=Sampling, Ph4=Prefix, Ph5=Routing, Ph6=Merging, 
Ph7=Termination}.}
\label{tbl7q}
\end{table}

\begin{table}[htb]\scriptsize \centering
\begin{tabular}{|l|r|r|r||r|r|r||r|r|r||r|r|r|} \hline 
  \multicolumn{1}{|c|}{}      &
  \multicolumn{6}{|c||}{Time per common phase} &
  \multicolumn{6}{|c|}{Percent(\%) of total time per phase} \\ \cline{2-13}
  \multicolumn{1}{|c|}{}      &
  \multicolumn{3}{|c||}{{\sf [DSR]}on {\sf [U]}} &
  \multicolumn{3}{|c||}{\cite{jaja1} on {\sf [WR]}} &
  \multicolumn{3}{|c||}{{\sf [DSR]}on {\sf [U]}} &
  \multicolumn{3}{|c|}{\cite{jaja1} on {\sf [WR]}} \\ \cline{2-13}
     &32       &64      &128     &32      &       64&   128 
     &32       &64      &128     &32      &       64&   128  \\ \hline
Ph 2    &0.560  &0.277   &0.139  &0.591  &0.299   &0.151  &57.94   &53.42 &41.12  &60.60 &50.41 &30.52\\ \hline
Ph R    &  -    &   -    &  -    &0.045  & 0.029  &0.019  & -      & -    &  -    & 4.62 & 4.88 & 3.94 \\ \hline
Ph 5    &0.073  &0.038   &0.022  &0.050  &0.039   &0.076  & 7.55   &7.31  &6.51   & 5.13 & 6.65 &15.35\\ \hline
Ph 6    &0.315  &0.192   &0.152  &0.215  &0.152   &0.107  &32.57   &36.92 &43.97  &22.06 &25.65 &21.64\\ \hline
Total   &0.967  &0.52    &0.338  &0.976  &0.593   &0.496  &100     &100   &100    &100   &100   &100\\ \hline
\end{tabular}
\caption{{\small Scalability comparison of {\sf [DSR]} and \cite{jaja1}. 
Ph2=SeqSort, PhR,Ph5=Routing, Ph6=Merging}.}
\label{tbl8}
\end{table}

Tables \ref{tbl7r} and \ref{tbl7q} show the distribution of execution time for
sorting 8M and 32M integers by the deterministic algorithm when input is 
{\sf [U]}. Percentages of total running time for the execution time of each
phase are also shown.
With reference to Figure \ref{SortD}, Ph1 corresponds
to the part of the code before line 2, Ph2 (sequential sorting) to the 
execution of lines 2-3,
Ph3 to lines 4-7, Ph4 to lines 8-9, Ph5 (key routing) to lines 10-11, 
and Ph6 (sequential merging) to
lines 12-13 of the code for {\sc Sort\_Det\_BSP}.
Communication performance is scalable 
though slightly worse than those of the  randomized algorithm as 
random oversampling causes more balanced communication.

From Tables \ref{tbl6r} and \ref{tbl7r} and Tables \ref{tbl6q} and
\ref{tbl7q}, the sequential portion of the 
code contributes at least 86-93\% of the running time. As the two 
algorithms share most operations except sampling, the performance of
the two algorithms in the first few phases is similar. The quality of 
sampling affects Phases 5 and 6.  This is evident for example from
the figures in the four tables for $n=8M$ and $p=128$. The randomized
algorithm is about $0.033$ seconds faster than the deterministic algorithm
and this is solely due to key-imbalance and the better balancing
properties of the randomized algorithm whose oversampling parameter
can vary more widely than that of the deterministic algorithm.  

Table \ref{tbl8} gives comparative timing results for each one of the
major phases of {\sf [DSR]} and the algorithm in \cite{jaja1} on inputs 
{\sf [U]} and {\sf [WR]} respectively. As for our implementation, 
the running time of {\sf [DSR]} on either input {\sf [U]} or {\sf [WR]} are 
almost identical, the phase by phase timing of our implementation does not 
change when input is {\sf [WR]} rather than  {\sf [U]}. From Table 
\ref{tbl8} we conclude
that even if our sequential methods are slower for multi-way
merging than those of \cite{jaja1}
the balanced single round of communication in {\sf [DSR]}
is more scalable  than the corresponding round
of \cite{jaja1} (listed under Ph 5 in Table \ref{tbl8}),
even though the portable communication library 
of our experiments may be less efficient than the one of 
\cite{jaja1,jaja2,jaja3}.
This may 
be attributable to sampling choices or the way duplicate keys are 
handled in \cite{jaja1}.

\begin{table}[htb] \centering
{\small
\begin{tabular}{|l|r|r|r|r|r|r|} \hline 
Algorithm    & Input     &{$p=8$}  &{$p=16$ }&{$p=32$ }&{$p=64$ }&{$p=128$} \\ \hline
{\sf [RSR]}  &{\sf [U]}  &3.16	   &1.74     &0.956    &0.526    &0.300 \\ \hline
\cite{jaja2} &{\sf [U]}  &3.32	   &1.77     &0.952    &0.513    &0.284 \\ \hline \hline

{\sf [RSR]}  &{\sf [WR]}&3.16	   &1.74     &0.956    &0.527    &0.306 \\ \hline
\cite{jaja3} &{\sf [WR]}&3.21      &1.74     &0.966    &0.540    &0.294\\ \hline \hline

{\sf [DSR]}  &{\sf [WR]}&3.18      &1.73     &0.945    &0.530    &0.372 \\ \hline
\cite{jaja3} &{\sf [WR]}&4.07      &2.11     &1.150    &0.711    &-    \\ \hline
\cite{jaja1} &{\sf [WR]}&3.23      &1.73     &0.976    &0.594    &0.496  \\ \hline \hline

{\sf [DSQ]}  &{\sf [WR]}&3.65      &1.82     &0.930    &0.489    &0.337 \\ \hline
{\sf [RSQ]}  &{\sf [WR]}&3.64      &1.82     &0.938    &0.486    &0.272 \\ \hline \hline
{\sf [DSQ]}  &{\sf [U]} &3.92      &1.98     &1.066    &0.566    &0.386 \\ \hline
{\sf [RSQ]}  &{\sf [U]} &3.90	   &2.00     &1.070    &0.559    &0.310 \\ \hline \hline
{\sf [DSR]}  &{\sf [U]} &3.18      &1.72     &0.947    &0.532    &0.374 \\ \hline
\end{tabular}
\caption{{\small Comparison of our results with other 
algorithm implementations.}}
\label{tbl9}
}
\end{table}

In Table \ref{tbl9} the first four rows compare the performance
of {\sf [RSR]} to the ones in \cite{jaja2,jaja3} for problem size
$n=8M$. Performance 
of the compared
algorithms is similar for up to $p=64$ processors except for $p=128$ where
our implementation is slightly worse by about 5\% attributable mainly to
some inefficiency in sequential merging. The following three rows compare
the performance of {\sf [DSR]} to the deterministic regular sampling 
algorithms in \cite{jaja1,jaja3}.
{\sf [DSR]} is more efficient than the algorithm in \cite{jaja3} through
all processor ranges by about 15-30\%. Compared to the implementation in
\cite{jaja1} the two exhibit about the same performance for up to $p=32$ (the
disadvantage of our sequential merging is counterbalanced by the two
rounds of communication of \cite{jaja2}), and for $p=64$ and $p=128$
{\sf [DSR]} is better by 10\% and 30\% respectively due to more
scalable communication and the single communication round of {\sf [DSR]}.

\begin{table}[htb] \centering
{\small
\begin{tabular}{|r|c|c|c|c|c||c|c|c|c|c|} \hline 
  \multicolumn{1}{|c|}{}      &
  \multicolumn{5}{|c||}{[DSR]} &
  \multicolumn{5}{|c|}{[DSQ]} \\ \cline{2-11}
  \multicolumn{1}{|c|}{} &
  \multicolumn{1}{|c|}{8}     & \multicolumn{1}{|c|}{16}    &
  \multicolumn{1}{|c|}{32}    & \multicolumn{1}{|c|}{64}    &
  \multicolumn{1}{|c||}{128}   &
  \multicolumn{1}{|c|}{8}     & \multicolumn{1}{|c|}{16}    &
  \multicolumn{1}{|c|}{32}    & \multicolumn{1}{|c|}{64}    &
  \multicolumn{1}{|c|}{128}   \\ \hline
1M & 0.395 & 0.222  & 0.128  & 0.077  & 0.133  & 0.413 & 0.222 & 0.127 & 0.075  &0.135 \\ \hline
4M & 1.574 & 0.869  & 0.480  & 0.280  & 0.270  &1.807  & 0.985 & 0.514 & 0.294  &0.280\\ \hline
8M & 3.263 & 1.728  & 0.947  & 0.532  & 0.339  &3.967  &1.988  & 1.065 & 0.565	&0.352 \\ \hline
  \multicolumn{1}{|c|}{}      &
  \multicolumn{5}{|c||}{[RSR]} &
  \multicolumn{5}{|c|}{[RSQ]} \\ \cline{2-11}
  \multicolumn{1}{|c|}{} &
  \multicolumn{1}{|c|}{8}     & \multicolumn{1}{|c|}{16}    &
  \multicolumn{1}{|c|}{32}    & \multicolumn{1}{|c|}{64}    &
  \multicolumn{1}{|c||}{128}   &
  \multicolumn{1}{|c|}{8}     & \multicolumn{1}{|c|}{16}    &
  \multicolumn{1}{|c|}{32}    & \multicolumn{1}{|c|}{64}    &
  \multicolumn{1}{|c|}{128}   \\ \hline
1M &0.399  & 0.223  & 0.126  & 0.079  & 0.060  &0.416  &0.223  &0.124  & 0.076 &0.057 \\ \hline
4M & 1.593 &0.877   & 0.484  & 0.270  & 0.165  &1.826  &0.992  &0.516  & 0.284 & 0.164 \\ \hline
8M &3.161  & 1.746  & 0.957  &0.527   & 0.302  &3.915  & 2.00  &1.074  &0.558  &0.314 \\ \hline
\end{tabular}
\caption{{\small Scalability of {\sf [DSR], [RSR]} and {\sf [DSQ],[RSQ]} 
on same input [U].}}
\label{tbl11d}
}
\end{table}

Table \ref{tbl11d} depicts the scalability of the four sorting variants
introduced in this work
with increasing processor sizes for three moderate inputs sizes. As mentioned
earlier for $p=128$ and $n=8M$, the asymptotic claims for
{\sc Sort\_IRan\_BSP} and {\sc Sort\_Det\_BSP} hardly hold, and this
is also evidenced by the deteriorating performance of the
implementations for $n=4M$ and $n=1M$.
The slowdown for $n=1M$ is also attributable to the effect of the
extra code for duplicate handling;  disabling  of this piece of code
results in the elimination of the slowdown
observed for $n=1M$; the obtained speedup is however minimal.
We note that the experiments reported in Table \ref{tbl11d} 
constitute a separate set of experiments in addition to those
performed for the creation of Table \ref{tbl9}, Table \ref{tbl1},
and Table \ref{tbl2}; this explains slight differences in running
times which for almost all cases affect the third decimal digit of
various reported timing entries.

Another work that includes experimental results on deterministic
sorting appears in \cite{Hilla}. Table \ref{tbl12} below compares
{\sf [DSQ]} to a direct implementation of the regular
sampling algorithm of \cite{SH} reported in \cite{Hilla}.
The implementation  of \cite{Hilla} is similar to the deterministic
algorithm of \cite{jaja3} 
and the performance of the two algorithms is similar and significantly 
worse than the more refined algorithm {\sc Sort\_Det\_BSP} 
({\sf [DSQ]}) or the one in \cite{jaja1}. In addition, the 
algorithm in \cite{Hilla} as well as the algorithm in \cite{jaja3} 
can not handle duplicate keys.

\begin{table}[htb] \centering
{\small
\begin{tabular}{|l|r|r|r|r|r|r|r|} \hline 
Algorithm    & Input     & $n$           &{$p=8$}  &{$p=16$ }&{$p=32$ }&{$p=64$ }&{$p=128$} \\ \hline
{\sf [DSQ]}  &{\sf [U]}&$1024\times 1024$&0.413     &0.222    &0.127    &0.075    &0.135 \\ \hline
\cite{Hilla} &{\sf [U]}&$1000000$        & 0.462     &0.240    &0.137    &0.117    &-    \\ \hline
\end{tabular}
\caption{{\small Comparison of {\sf [DSQ]} on {\sc [U]} with 
$n=1024\times 1024$ keys to \cite{Hilla} on {\sc [U]} with $n=1,000,000$ keys.
}}
\label{tbl12}
}
\end{table}


\section{Conclusion}
\label{s7}

In this work we have introduced deterministic and randomized algorithms
for internal memory sorting that were designed and their performance
analyzed in an architecture independent way under the BSP model.
The deterministic algorithm  \cite{GS99a} is an improvement of the 
regular sampling algorithm of \cite{SH} and uses the technique of 
deterministic oversampling (used for  randomized sorting in \cite{Reif87})
and in the general case, its theoretical performance is fully analyzed
in \cite{GS99a}. Some other of its advantages are its parallel sample-sort
that allows $p$ to be much close to $n$ than the algorithm in \cite{SH}.
From our experimental observations it seems
that parallel sample-sort is effective even for small problem sizes,
even if other
parallel sorting algorithms and implementations do not use it.
In addition, the introduced in this work transparent and efficient
duplicate handling method of Section \ref{duplicate} allows the algorithm to
handle duplicate keys without doubling the computation load or the
communication time that other approaches seem to 
require \cite{jaja1,jaja2,jaja3}.
The randomized algorithm although oversample-sort based
works differently from other sample-sort based approaches
and even improves upon the performance of the first
randomized BSP sorting algorithm of \cite{GV94}; it achieves this
by using ideas derived from  the deterministic algorithm. 
It also handles duplicate keys
with insignificant degradation in performance (computation or communication).
Given the insistence under the BSP model of one-optimality and the
accurate accounting of key imbalance during sorting, it is
easy to fine-tune the work-load balance and communication time by adjusting
the oversampling parameter and sample size of any of the two algorithms.
The experimental results presented in this work verify the theoretical claims
on the efficiency of the introduced algorithms. We have compared our
implementations to other parallel sorting implementations on the same
or similar platforms and realized that in terms of parallel
communication efficiency our algorithms seem to outperform other 
implementations and overall exhibit comparable if not better
performance (e.g. deterministic algorithm) even though the sequential
algorithms used in our implementations seem, in many
instances,  to be slower than those of other implementations. This suggests 
that one can probably improve overall performance of our implementations 
if sequential 
methods are optimized given the fact that in our communication efficient
designs sequential code takes 80-90\% of execution time for the problem 
instances examined.
As far as communication is concerned there is probably no room for
significant
improvement unless one tries to optimize the single
communication round by utilizing platform specific communication
primitives at the expense of portability.

The experimental portion of this work  was performed while both authors 
were with the Oxford University Computing Laboratory, The University 
of Oxford, Oxford, UK.  The work of the first author there was supported 
in part by EPSRC under grant GR/K16999 and that of the second author was 
supported in part by a Bodossaki Foundation Graduate Scholarship.
The support of the Edinburgh Parallel Computing
Centre in granting the first author access to a Cray T3D computer is 
gratefully acknowledged.
The work of the first author was subsequently supported in part
by NJIT SBR grant 421350 and NSF grant NSF-9977508.

\end{document}